\newcommand{\ket}[1]{| #1 \rangle}
\newcommand{\bea}{\begin{eqnarray}}
\newcommand{\eea}{\end{eqnarray}}
\newcommand{\nrank}[1]{\mathrm{rank}_+(#1)} 
\newcommand{\psdrank}[1]{\mathrm{rank}_{psd}({#1})} 
\newtheorem{proposition}{Proposition}
\theoremstyle{definition}
\begin{document}
\title{Random Exclusion Codes: Quantum Advantages of Single-Shot Communication}

\author{Joonwoo Bae}
\email{joonwoo.bae@kaist.ac.kr}
\affiliation{School of Electrical Engineering, Korea Advanced Institute of Science and Technology (KAIST), $291$ Daehak-ro, Yuseong-gu, Daejeon $34141$, Republic of Korea}

\author{Kieran Flatt}
\email{kflatt@kaist.ac.kr}
\affiliation{School of Electrical Engineering, Korea Advanced Institute of Science and Technology (KAIST), $291$ Daehak-ro, Yuseong-gu, Daejeon $34141$, Republic of Korea}

\author{Teiko Heinosaari}
\email{ teiko.heinosaari@jyu.fi}
\affiliation{Faculty of Information Technology, University of Jyväskylä, Finland}

\author{Oskari Kerppo}
\email{oskari.kerppo@quanscient.com}
\affiliation{Quanscient Oy, Tampere, Finland}

\author{Karthik Mohan}
\email{karthikmohan@kaist.ac.kr}
\affiliation{School of Electrical Engineering, Korea Advanced Institute of Science and Technology (KAIST), $291$ Daehak-ro, Yuseong-gu, Daejeon $34141$, Republic of Korea}

\author{Andrés Muñoz-Moller}
\email{andres.d.munozmoller@jyu.fi}
\affiliation{Faculty of Information Technology, University of Jyväskylä, Finland}

\author{ Ashutosh Rai}
\email{ashutosh.rai@kaist.ac.kr }
\affiliation{School of Electrical Engineering, Korea Advanced Institute of Science and Technology (KAIST), $291$ Daehak-ro, Yuseong-gu, Daejeon $34141$, Republic of Korea}

\begin{abstract}

Useful applications of quantum information technologies can be found by identifying tasks in which quantum resources outperform their classical counterparts. In this work, we introduce a two-party communication primitive, {\it random exclusion code} (REC), which is a single-shot prepare-and-measure protocol where a sender encodes a random message into a shorter sequence and a receiver attempts to {\it exclude} a randomly chosen letter in the original message. We present quantum advantages in RECs in two ways: {\it probability} and {\it dimension}. We show that RECs with quantum resources achieve higher success probabilities than classical strategies. We verify that the quantum resources required to describe detection events of RECs have a smaller dimension than classical ones. We also show that a guessing counterpart, {\it random access codes} (RACs), may not have a dimension advantage over classical resources. Our results elucidate various possibilities of achieving quantum advantages in two-party communication. 

\end{abstract}

\maketitle


\section{Introduction}

One of the key challenges in quantum information theory is to address information tasks that demonstrate the distinct nature of quantum and classical resources. A gap in performance appearing from the distinction, even if it is small, may be amplified toward practical applications, namely, quantum advantages in communication \cite{winter}. It is hence of both fundamental and practical importance to identify a simplest information scenario, such as a point-to-point communication protocol, such that quantum advantages are present. The Holevo theorem, concerning the channel capacity with {\it i.i.d.} quantum states, was the first attempt along these lines but failed to show a quantum advantage \cite{Holevo2011, 651037, PhysRevA.56.131}. 

It is random access codes (RACs) that have verified the existence of quantum advantages in two-party communication \cite{Ambainis2009, Ambainis2024RAC, Tavakoli2015, carmeli2020quantum}. We emphasize that the setting is a single-shot scenario, in which a single outcome is obtained from a detection event, from which a receiver learns about a message. The guess is then correct with some probability. The quantum advantage is demonstrated by the fact that the guessing probability in RACs is higher if quantum, rather than classical, resources are used. 

Although the connection is not direct, RACs are closely related to quantum state discrimination \cite{Helstrom:1969aa, Bae_2015, Barnett:09, Bergou:2007aa}. Interestingly, quantum state exclusion, also known as anti-discrimination, which efficiently excludes a state from an ensemble, has been used as an effective tool to clarify the distinctions between quantum and classical resources in single-shot experiments \cite{Caves2002, PuseyPBR2012, PhysRevA.89.022336, HeKe18}. Quantum advantages can be obtained in terms of a success probability \cite{PhysRevLett.125.110402, PhysRevResearch.2.033374}, and a quantum-classical gap is also present in detection events. The Pusey-Barrett-Rudolph (PBR) theorem \cite{PuseyPBR2012}, for instance, rules out a certain interpretation of quantum states by pinpointing a detection event that cannot happen with quantum resources. 

Upon examining the fundamental tasks closely, it becomes clear that discrimination and exclusion are distinct in nature. Perfect discrimination of states may achieve perfect exclusion by reassigning events to different labels. The converse does not generally hold. Quantum state exclusion is thus operationally more general. State exclusion may therefore lead to greater possibilities of quantum advantages. 

As discussed above, efforts to distinguish quantum and classical resources in a single-shot scenario take one of several approaches. One is to compare the success probability for a single detection event. As with RACs, the probability with quantum resources may be higher than that with classical resources. The other approach is to take into account detection events. Along these lines, the PBR theorem signifies a detection event that is impossible with quantum resources but is possible with independent ontic preparations \cite{PuseyPBR2012, Harrigan:2010aa}.  Note that the Bell theorem is comparable to the latter in that it also rules out an ontic interpretation of quantum states; in contrast to the PBR result, two parties repeat a Bell experiment for estimating joint probabilities \cite{PhysicsPhysiqueFizika.1.195, Brunner2014}. It is worth mentioning that the quantum-classical gap from the Bell theorem has been developed as a practical tool for device-independent quantum information processing, such as randomness extraction \cite{Pironio:2010aa}, quantum key distribution \cite{PhysRevLett.98.230501} and state certification \cite{Supic2020selftestingof, Supic2021deviceindependent}. One would therefore expect useful applications to follow from other quantum-classical distinctions.

In this work, we present {\it random exclusion codes} (RECs) that leverage quantum state anti-discrimination to a communication primitive. Throughout, an $(n,m)$ REC of interest is a single-shot and prepare-and-measure protocol where a sender, Alice, encodes a random message of length $n$ using a set of $m$ letters into a $d$-dimensional system and transmits it. Bob, a receiver, realizes a decoding to exclude a chosen letter in the original message.

We show the quantum advantages of RECs in two ways: one is the probabilistic advantage for a single detection event, and the other is the dimension advantage for a collection of possible detection events. Firstly, a higher success probability in the exclusion task of a REC can be achieved with quantum resources than with classical ones. Secondly, quantum resources in a smaller Hilbert space achieve RECs that can be realized only in a larger classical dimension. Note that a classical dimension denotes the minimal number of distinguishable letters of an alphabet. 

The importance of the results is twofold. On the one hand, we relate state exclusion, a fundamental tool for verifying distinctions between the elements of quantum and classical theories \cite{Caves2002, Bandyopadyhyay2014Exclusion, Crickmore2020Elimination, PuseyPBR2012}, to a communication primitive that may find practical applications. On the other hand, we demonstrate the quantum-classical gap within a single-shot framework by highlighting the differences between quantum and classical supports for detection events, see also \cite{PuseyPBR2012}. We also find that RACs may not have a comparable dimension advantage. 

The article is structured as follows. We first remind the reader of some relevant results in the theory of quantum state exclusion and random access coding. We then introduce random exclusion codes and show that quantum resources outperform classical ones in the simplest non-trivial setting. We then analyse the required dimensional supports for perfect quantum and classical RECs and see that the former require fewer dimensions. Throughout, our results are contrasted with those from the equivalent RAC.

\section{Background}

\subsection{Quantum state exclusion}

In information processes, one is typically interested in determing that a datum takes a particular value from a known, discrete alphabet. Exclusion conversely refers to the opposite task of determining that a datum does not have a given value \cite{Crickmore2020Elimination}. The latter task is in general easier than the former, but recent works have shown that \emph{perfect} implementation of this task is a signature of quantum behaviour.
\par
In quantum state exclusion, a classical alphabet of size $ m$ will be associated with a set of quantum states $ \{ \rho_\alpha \}_{\alpha=0}^{ m-1}$ by a party Alice. She is given a value $\alpha$ at random and sends a system prepared in the appropriate state to a receiving party, Bob, who must announce one of the values which $\alpha$ is not. He determines his choice by implementing an $m$-outcome probability operator valued measurement (POVM) $\{ \pi_\beta \}_{\beta=0}^{ m-1}$.
\par
We can distinguish between different criteria for state exclusion. \emph{Perfect} state exclusion is that with zero-error, so that, whenever Alice prepares the state $\alpha$, Bob is guaranteed to announce a value $\beta \neq \alpha$. \emph{Uniform} state discrimination, on the other hand, refers to the case in which all allowed outcomes occur with uniform probability. We will return to this point in later sections of the manuscript. 

\begin{figure}
   \centering
\includegraphics[width=0.73 \linewidth]{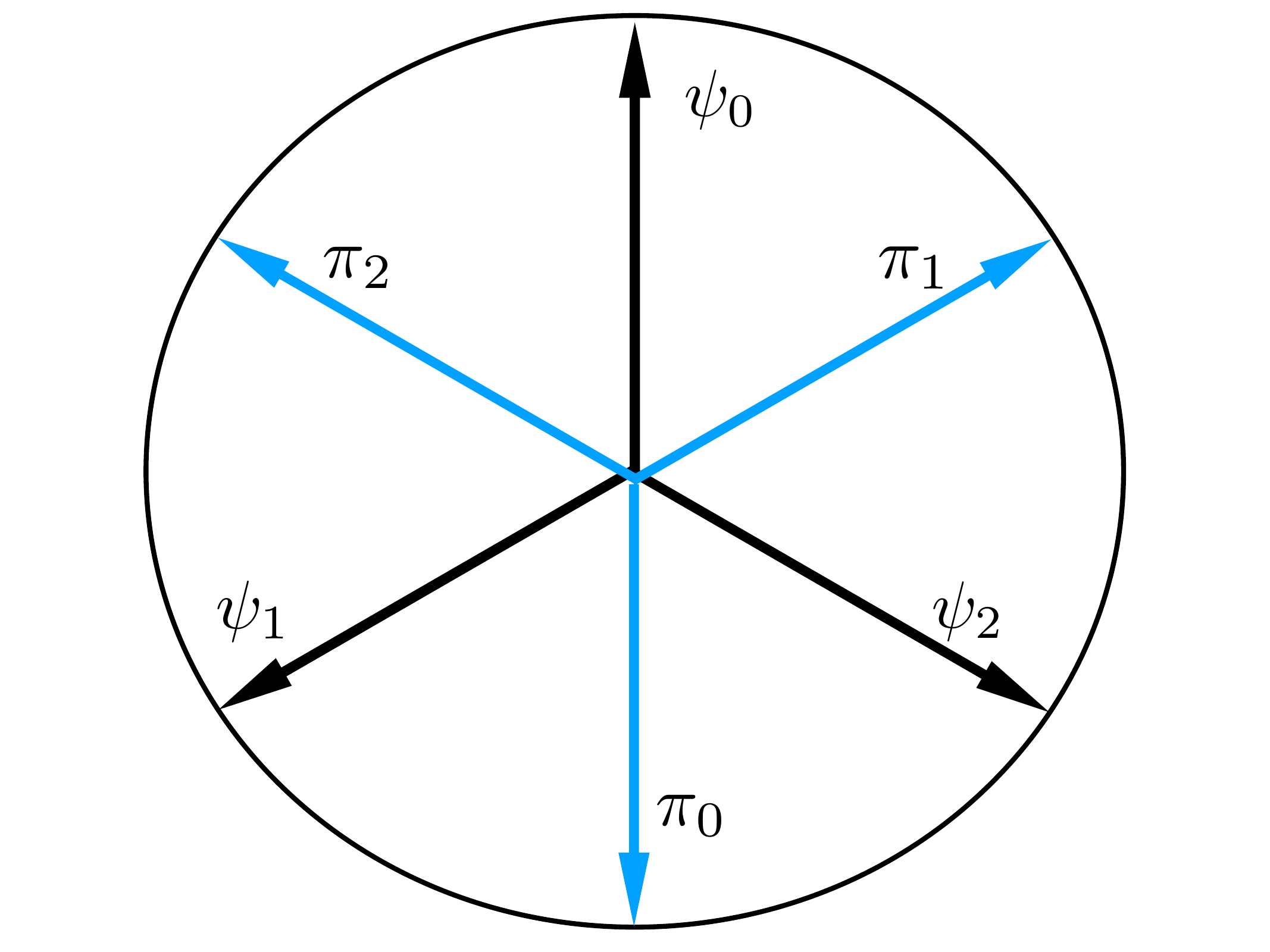}
    \caption{ Trine states (black) in Eq. (\ref{eq:trinem}) and an anti-trine measurement (blue) in Eq. (\ref{eq:antitrinem}) can realize an unbiased exclusion task. For instance, an outcome $\pi_0$ concludes state $\psi_1$ or state $\psi_2$ with equal probabilities $1/2$ while excluding $\psi_0$ unambiguously. }
\label{fig:trine}
 \end{figure}

The simplest non-trivial construction which is able to implement perfect and uniform state exclusion is the set of trine states
\begin{equation} \label{eq:trinem}
|\psi_\alpha \rangle  = \frac{1}{\sqrt{2}} \left( | 0 \rangle + e^{ i 2 \pi \alpha / 3} | 1 \rangle \right) \, \, \, \alpha = 0, 1, 2 \, ,
\end{equation}
and anti-trine measurements
\begin{equation} \label{eq:antitrinem}
\pi_{\beta} = \frac{2}{3} | {\psi}_{\beta}^{\perp}  \rangle \langle {\psi}_{\beta}^{\perp}  |,  \, \, \, {\beta}= 0, 1,2
\end{equation}
where $ \langle \psi_\beta | {\psi}_{\beta}^{\perp}  \rangle = 0 $. This set of states and measurements is shown in Fig. \ref{fig:trine} and has been used to implement key distribution \cite{Phoenix2000} and in showing gaps between classical and quantum behaviour \cite{Heinosaari2024Simple}. The defining feature in this example is symmetry. In similar symmetric constellations one finds the excluding POVM as a group covariant POVM \cite{HeKe18}.  

The problem of state exclusion was first studied by Caves et al. \cite{Caves2002} and came to wider attention when it was used to prove the PBR theorem regarding the $\psi$-onticity of quantum theory. Subsequent investigations have aimed to characterise the set of quantum ensembles which states can be unambiguously excluded \cite{Johnston2025tightbounds}, and related conclusive exclusion to various foundational problems in quantum theory \cite{Havlicek2020Exclusion, Srikumar2024contextuality, Stratton2024Choi} and the construction of zero-error communication channels \cite{Duan2016, Chiribella2025}. The key takeaway from these results has been that perfect state exclusion is a signature of quantum behaviour and, furthermore, may be exploited for improved communications.

\subsection{Random Access Codes}

Random access codes are a communication primitive widely studied in the context of quantum theory for both their information-theoretic applications and foundational significance~\cite{Ambainis2009, Ambainis2024RAC, Tavakoli2015, carmeli2020quantum}. Alice holds a string $a = a_1... a_{n}$ consisting of $n$ independent classical labels each taken equiprobably from an $m$-symbol alphabet  $\{0,\cdots,m-1\}$. This data is encoded into a system, either classical or quantum, of dimension $d$ which is then sent to the receiver, Bob. Bob's task is to read out some subset of the information. He is given a question label, $i$, and aims to respond with $b_i$ such that $b_i = a_i$. Measures of his ability to perform this task use either the worst-case probability of success or the average success probability, given by
\begin{equation} \label{eq:avscRAC}
{p}^{\mathrm{(RAC)}} =\frac{1}{n m^n} \sum_{a }\sum_{i=1}^{n} ~\mathrm{Prob} [ b_i =  a_i].
\end{equation}
If the communication system is quantum, Alice's encoding will map the data to a set of density matrices $\{ \rho_{a} \}$ and Bob's decoding will be a set of positive operator-valued measures (POVMs) $M_i$. If the system is instead classical, then the encoding and decoding are simply maps to and from dits. In this article, we are particularly interested in the case in which $n=2$ and the quantum and classical systems to be compared are both two-dimensional. In this restricted setting, upper bounds are known to be~\cite{Tavakoli2015,Ambainis2024RAC}:
\begin{equation}\label{eq:d-level-bound}
\begin{split}
   \max_{Q} p^{(RAC)} &= \frac{1}{2}(1 + \frac{1}{\sqrt{m}}) \\
 \max_{C}  p^{(RAC)} &= \frac{1}{2}(1 + \frac{1}{m}).
\end{split}
\end{equation}
where the former optimisation is over quantum strategies and the latter over classical ones. In this case, and in general, it holds that $\max_{Q} p^{(RAC)}  >  \max_{C}  p^{(RAC)}$.

\section{Random Exclusion Codes}

We begin with the notation for an $(n,m)$ REC. A message of Alice's is written by $a = a_1 ... a_{n}$  where each is chosen from a set of $m$ distinguishable letters, $a_i \in \{0,\cdots, m-1 \}$. An encoding prepares a $d$-dimensional system, and we let $b_j$ denote Bob's decoding of $a_j$ once he decides to read out $a_j$, see also Fig.~\ref{fig:scenario}. With an REC, Bob wants to find $b_j$ that is not equal to $a_j$. In this case, the probability of successfully excluding a correct value on average is given,
\bea \label{eq:avscREC}
{p}^{\mathrm{(REC)}} =\frac{1}{n m^n} \sum_{a }\sum_{i=1}^{n} ~\mathrm{Prob} [ b_i\neq a_i].
\eea
A quantum realization signifies a preparation of a $d$-dimensional quantum state and $m$-outcome measurements for Bob to obtain $b_j$. 

For RACs, in contrast to RECs, Bob wants to guess Alice's message and the average success probability is computed from Eq.~(\ref{eq:avscRAC}). In both the tasks, above in Eqs. (\ref{eq:avscRAC}) and (\ref{eq:avscREC}), parties aim to maximize the success probability over all strategies with given resources.

Note that state discrimination and state exclusion are equivalent for an alphabet of two letters: guessing one immediately excludes the other, and vice versa. For example, for binary messages $a_i \in \{0,1\}$, Bob can apply two-state discrimination and it holds that 
\bea
\max_S {p}^{ \mathrm{(RAC)} } =  \max_S {p}^{ \mathrm{(REC)}}, \label{eq:rel}
\eea
where the maximization runs over all strategies $S$. From this, one immediately concludes that RECs bear quantum advantages of RACs in the $(2,2)$ instance. However, the relation above no longer holds when an alphabet contains more than two letters as excluding a correct value does not yield a single outcome but rather multiple ones.

Let us then consider the next simplest case, the $(2,3)$ REC, where two-trit messages $ a=a_1a_2$ for $a_i\in \{0,1,2 \}$ are encoded to a binary system, for which a quantum (classical) realization is a single qubit (bit). In what follows, we demonstrate a quantum advantage by showing that Bob has a higher probability of excluding a message sent by Alice with quantum rather than classical resources. 
\par

\section{Quantum advantage in $(2,3)$-RECs}
We prove analytically that, if the communication system is a classical bit, the maximum value of the average probability of success in the considered task is $\frac{8}{9}$. We then show that if instead quantum systems of dimension two are used, a larger value of $ \frac{7+\sqrt{2}}{9}$ can be attained. It is worth mentioning here that for the considered REC, since the alphabet $\{0,1,2\}$ is non-binary, the corresponding RAC task is not equivalent, which can be seen from the fact that the maximum classical value for the RAC task is $\frac{5}{9}$ (see Result~3 in the Ref.~\cite{deba+PRA2023}). The inequivalence also holds for the quantum values, where the maximum success probability for the corresponding RAC is $\frac{4+\sqrt{2}}{9}$. \\
 
\subsection{Optimal bit implementation}
We first compute the maximum average success probability with classical resources, which can be summarised as follows: 
\begin{proposition}
The maximum classical average success probability in the $(2,3)$ random exclusion task, consisting of length~$n=2$ words using a size~$m=3$ alphabet and $d=2$ dimension communicated system, is $8/9$.
\end{proposition}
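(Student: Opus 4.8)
The plan is to reduce the optimization to deterministic strategies and then settle the deterministic case by a short combinatorial argument. The most general classical strategy lets Alice and Bob share a random variable $\lambda$ and, conditioned on $\lambda$, apply deterministic encoding and decoding maps; private randomness is a special case. Since $p^{(\mathrm{REC})}=\sum_\lambda P(\lambda)\,p^{(\mathrm{REC})}_\lambda$ is an average of the deterministic values $p^{(\mathrm{REC})}_\lambda$, we have $\max_C p^{(\mathrm{REC})}=\max_{\mathrm{det}} p^{(\mathrm{REC})}$, so it suffices to maximize over deterministic strategies. A deterministic strategy is an encoding $e:\{0,1,2\}^2\to\{0,1\}$, equivalently a bipartition $\mathbb{P}_0=e^{-1}(0)$, $\mathbb{P}_1=e^{-1}(1)$, together with a decoding that outputs, for the received symbol $c$ and queried index $i$, a fixed guess $b_i^{(c)}$. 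I would write the two decoding ``targets'' as $(r,s)=(b_1^{(0)},b_2^{(0)})$ and $(r',s')=(b_1^{(1)},b_2^{(1)})$.

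For a message $a=a_1a_2$ sent as $c$, the number of successful exclusions is $\mathbbm{1}[a_1\neq b_1^{(c)}]+\mathbbm{1}[a_2\neq b_2^{(c)}]$. Because the encoding routes each message independently, for fixed decodings the optimal encoding sends every message to whichever symbol yields more successes, so the optimal success count over encodings is
\[
\sum_{a_1,a_2}\max\big(\mathbbm{1}[a_1\neq r]+\mathbbm{1}[a_2\neq s],\ \mathbbm{1}[a_1\neq r']+\mathbbm{1}[a_2\neq s']\big),
\]
and $18\,p^{(\mathrm{REC})}$ equals this quantity. Each summand is at most $2$, giving the trivial bound $18$ (that is, $p^{(\mathrm{REC})}\le1$); the whole content of the proposition is that the total deficit from the value $2$ is at least $2$. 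A message has $\max\le1$ precisely when it coincides with \emph{both} decoding targets in at least one coordinate, so introducing $U_0=\{a:a_1=r\text{ or }a_2=s\}$ and $U_1=\{a:a_1=r'\text{ or }a_2=s'\}$ (each of size $5$), the total deficit is at least $|U_0\cap U_1|$.

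The key step is to show $|U_0\cap U_1|\ge2$ in all cases, which I would do by a four-way case analysis on whether $r=r'$ and whether $s=s'$. If the targets agree in a coordinate, $U_0\cap U_1$ contains a full ``line'' of three messages (e.g.\ all $a$ with $a_1=r=r'$); if they differ in both coordinates, one computes $U_0\cap U_1=\{(r,s'),(r',s)\}$, two distinct messages, each with $\max=1$. Either way the deficit is at least $2$, so $18\,p^{(\mathrm{REC})}\le16$ and $\max_C p^{(\mathrm{REC})}\le8/9$. For achievability I would note that the explicit strategy of the main text, $\mathbb{P}_0=\{11,12,21,22\}$ and $\mathbb{P}_1=\{00,01,02,10,20\}$ with $b_i^{(c)}=c$, is exactly a ``both coordinates differ'' instance with $(r,s)=(0,0)$, $(r',s')=(1,1)$; a direct count gives $16$ successes, saturating the bound and yielding $\max_C p^{(\mathrm{REC})}=8/9$.

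The step I expect to be the main obstacle is precisely sharpening the trivial deficit bound to $2$. The naive inclusion bound $|U_0\cup U_1|\le9$ only gives $|U_0\cap U_1|\ge1$, hence the weaker conclusion $p^{(\mathrm{REC})}\le17/18$; ruling out this spurious value requires the case distinction between agreement and disagreement of the two decodings in each coordinate, which is what forces the intersection to have size at least $2$.
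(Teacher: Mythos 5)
Your proof is correct, and while its skeleton matches the paper's (reduce to deterministic strategies by convexity, show every deterministic strategy fails in at least $2$ of the $18$ instances, exhibit the same achieving partition), the combinatorial core is organized genuinely differently. The paper keeps Alice's partition in play: it forms the two decode-words $b_1^0b_2^0$ and $b_1^1b_2^1$, splits into four cases according to which part of the partition each lands in, and in the delicate Case 4 brings in the crossed words $b_1^0b_2^1$ and $b_2^0b_1^1$ (your $(r,s')$ and $(r',s)$) plus a further sub-case $b_1^0=b_2^0$, $b_1^1=b_2^1$. You instead optimize the encoding out entirely --- for fixed decodings the best partition routes each message to the branch with more successes, so the score is $\sum_a\max_c(\cdot)$ --- and reduce the whole question to showing $\lvert U_0\cap U_1\rvert\ge 2$ for the two $5$-element failure sets determined by the decoder alone, with a clean case split on whether $r=r'$ and whether $s=s'$. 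This buys a shorter and more transparent argument (the partition never needs to be discussed, and the deficit bound is a single set-intersection computation), and it generalizes verbatim to alphabet size $m$ since $\lvert U_0\rvert=2m-1$ and the intersection analysis is unchanged; the paper's version, by conditioning on partition membership, makes it more immediately visible \emph{which} protocol instances fail, which is the form reused in its Appendix C. You are also right that the union bound alone only yields deficit $\ge 1$; your case analysis is exactly what closes that gap.
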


\begin{proof}
Let us begin by calculating the optimal value over all deterministic protocols. The two stages of a deterministic protocol are as follows: (i) The set of all possible words $\{a_1a_2: a_1,a_2\in\{0,1,2\}\}$ is partitioned into two parts $\{ \mathbb{P}_0, \mathbb{P}_1 \}$; if a word to Alice belongs to the part $\mathbb{P}_0$ ($\mathbb{P}_1$), she communicates to Bob $c=0$ ($c=1$). (ii) Bob, on receiving a question $i \in \{1,2 \}$, answers $b_i^c\in \{0,1,2\}$ following a deterministic strategy, i.e., a map $\{b_1^0,b_2^0,b_1^1,b_2^1\}\mapsto \{0,1,2\}$.

\par

Let us first note that, for a given deterministic strategy of Bob, two specific words follow, namely, the word $b_1^0 b_2^0$ and the word $b_1^1 b_2^1$. Now, let us consider that Alice receives either of these words. The message she communicates will then be determined by her partition, and there are four cases to consider:

\begin{enumerate}
    \item $b_1^0b_2^0, b_1^1b_2^1 \in \mathbb{P}_0$,
    \item $b_1^0b_2^0, b_1^1b_2^1 \in \mathbb{P}_1$,
    \item $b_1^0b_2^0 \in \mathbb{P}_0$ and $b_1^1b_2^1 \in \mathbb{P}_1$,
    \item $b_1^0b_2^0 \in \mathbb{P}_1$ and $ b_1^1b_2^1 \in \mathbb{P}_0$.
\end{enumerate}
We study each case in turn. In Case 1, when Alice receives the word $a = b_1^0 b_2^0$ she will communicate the value $c=0$ to Bob. For each question $i$ that he is asked he will output $b_i^0$, which was the value he aimed to exclude. Therefore, the protocol fails two times.  Note that, in this case, if instead she received the message $a=b_1^1 b_2^1$, Bob will end up answering $b_1^0$ and $b_2^0$ which need not be the same as the initial string; he can therefore succeed in these cases. Case 2 follows the same argument as Case 1. For Case 3, when Alice receives $a=b_1^0 b_2^0$, her message is $c=0$ and Bob's output will therefore be $b_i^0$, so he fails for both questions. Similarly, he will be incorrect whenever $b_1^1 b_2^1$ is the message received by Alice. Thus, in Case 3, Bob fails in at least four instances of the protocol.
\par
In Case 4, as in Case 3, it can never follow that $b_1^0 b_2^0 = b_1^1 b_2^1$ as this would mean that Alice puts the same word into two sides of the partition. Let us first suppose that $b_1^0 b_2^1 \neq b_2^0 b_1^1 $. Then, when Alice receives $a=b_1^0 b_2^1$ she will communicate either $c=0$ or $c=1$ depending on which partition it belongs to. In the former case, Bob's answer is in the string when his question is $i=1$ and in the latter case it is in the string when his question is $i=0$. Similarly, he will be incorrect in at least one position $i$ when the initial string is $a=b_2^0 b_1^1$. Case 4 therefore fails at least four times if $b_1^0 b_2^0 = b_1^1 b_2^1$. The final situation is that $b_1^0 b_2^1 = b_2^0 b_1^1$. This implies further that $b_1^0 = b_2^0$ and $b_1^1 = b_2^1$. On considering the words $a=b_1^0 b_1^1$ and $a=b_1^1 b_1^0$ , irrespective of which partition they belong to the protocol will fail at least once for each. Therefore, in Case 4 as well it is observed that the protocol fails to exclude in at least two instances.

\par
We can hence see that, in all four cases, for any deterministic strategy Bob answers incorrectly in at least two out of eighteen instances of the protocol. The maximal average success probability with deterministic strategies is therefore upper bounded by ${p}^{\mathrm{(REC)}}\leq 16/18 = 8/9$. This bound is, in fact, tight and can be achieved if Alice uses the partition $\mathbb{P}_0=\{11,12,21,22\}$ and $\mathbb{P}_1=\{00,01,02,10,20\}$, and if Bob's decoding is $b_i^c = c$. One can easily verify that this strategy gives the desired value. 
\par
Finally, we show that probabilistic strategies cannot perform better. Let's assume that there are $K$ deterministic strategies $S_j$ where $j\in\{1,\cdots,K\}$ and the parties have access to shared random variables $\{\lambda_j\}_{j=1}^{K}$ with probability distribution $p(\lambda_j)=p_j$. Note that any probabilistic strategy can be implemented with access to these random variables. If Alice and Bob apply strategy $S_j$ with probability $p_j$ we get a probabilistic strategy $S=\{p_j,S_j\}_{j=1}^K$. The average success probability with any such strategy $S$ is ${p}^{\mathrm{(REC)}}_{S}= \sum_j p_j {p}^{\mathrm{(REC)}}_{S_j} \leq  \sum_j p_j (8/9)= 8/9$. This completes the proof.
\end{proof}

\begin{figure}
    \centering
\includegraphics[width=0.98 \linewidth]{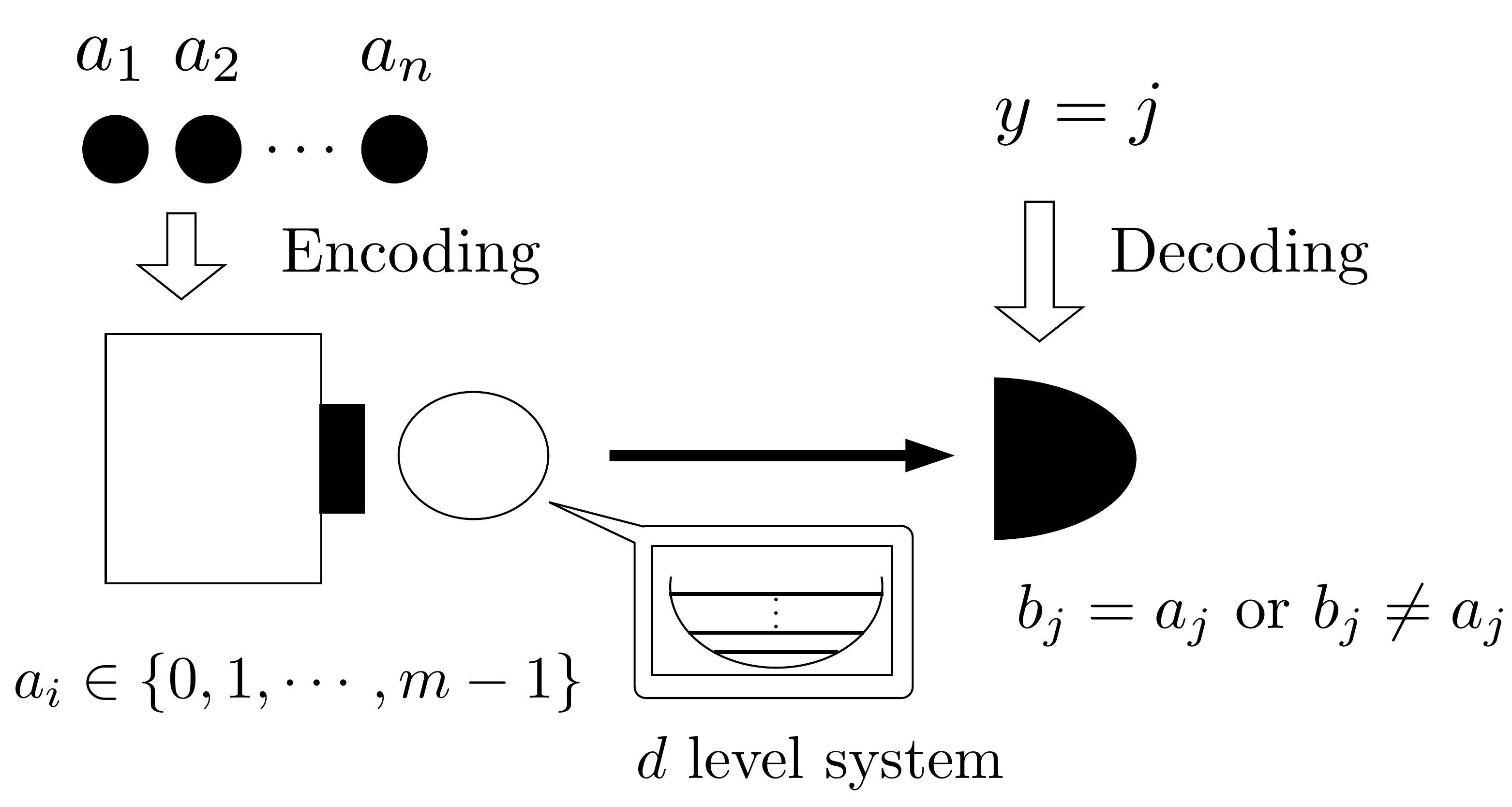}
    \caption{ In an $(n,m)$ REC, a sender prepares a message, $a=a_1\cdots a_n$,  of length $n$ where each is chosen from an alphabet of $m$ letters $a_i\in \{0,1,\cdots,m -1\}$. An encoding corresponds to a preparation of a single system of cadinality $d$, which is sent to Bob, who first decides which letter he will read, say the $j$-th one, and reads $b_j$. Bob wants to exclude the message sent by Alice, i.e., $b_j \neq a_j$. In RACs, parties want to have $b_j =a_j$. For these purposes, parties optimize strategies with quantum or classical resources.}
\label{fig:scenario}
\end{figure}

\subsection{ Qubit implementation and quantum advantage}

We next present a $(2,3)$ REC with single qubit states in the following,
\bea
\rho_a: = \rho_{a_1a_2} = |\psi_{a_1a_2}\rangle \langle \psi_{a_1a_2} |~\mathrm{ for}~  a_i \in \{0,1,2 \} \label{eq:s}
\eea
where
\bea
\ket{\psi_{00}} &=& \frac{  \ket{0} + e^{i \pi /4} \ket{1}}{\sqrt{2}},  ~~  \ket{\psi_{01}} = \frac{ \ket{0} + e^{-i \pi /4} \ket{1}}{\sqrt{2}},  \nonumber \\
\ket{\psi_{10}} &=& \frac{ \ket{0} + e^{i 3 \pi /4} \ket{1} }{\sqrt{2}},  ~  ~ \ket{\psi_{11}} = \frac{ \ket{0} + e^{-i 3 \pi /4} \ket{1} }{\sqrt{2}},  \nonumber \\
\ket{\psi_{20}} &=& \frac{\ket{0} + e^{i \pi /2} \ket{1}}{\sqrt{2}}, ~~ \ket{\psi_{21}} = \frac{\ket{0} + e^{ -i \pi /2} \ket{1} }{\sqrt{2}} \nonumber\\\ket{\psi_{02}} &=& \frac{ \ket{0} +  \ket{1}}{\sqrt{2}},   ~\ket{\psi_{12}} = \frac{ \ket{0} - \ket{1} }{\sqrt{2}},~\mathrm{and}~  \rho_{22} =   \mathbb{I}/2. \nonumber  
\eea
We construct a decoding with two-outcome measurements, 
\bea
M_1= \{ M_{10}, M_{11}, M_{12} \}~\mathrm{and}~M_2= \{ M_{20}, M_{21}, M_{22} \}~~~~ \label{eq:m}
\eea
where 
\bea
&&M_{10}= |+\rangle \langle +|,~M_{11}= |-\rangle \langle -|,~ M_{12}=0 \nonumber \\
&& M_{20}= |+i\rangle \langle +i|,~ M_{21}= |-i \rangle \langle -i|,~ \mathrm{and}~M_{22}=0, \nonumber
\eea
and $|\pm\rangle = (|0\rangle \pm |1\rangle)/\sqrt{2}$ and $|\pm i\rangle = (|0\rangle \pm i |1\rangle)/\sqrt{2}$. Notice that the states and measurements above are on a great circle of the Bloch sphere. We display the states and measurements of our scheme in in Fig. \ref{fig:23recsm}.

\begin{figure}[t!]
		\begin{center}
			\includegraphics[scale=0.18]{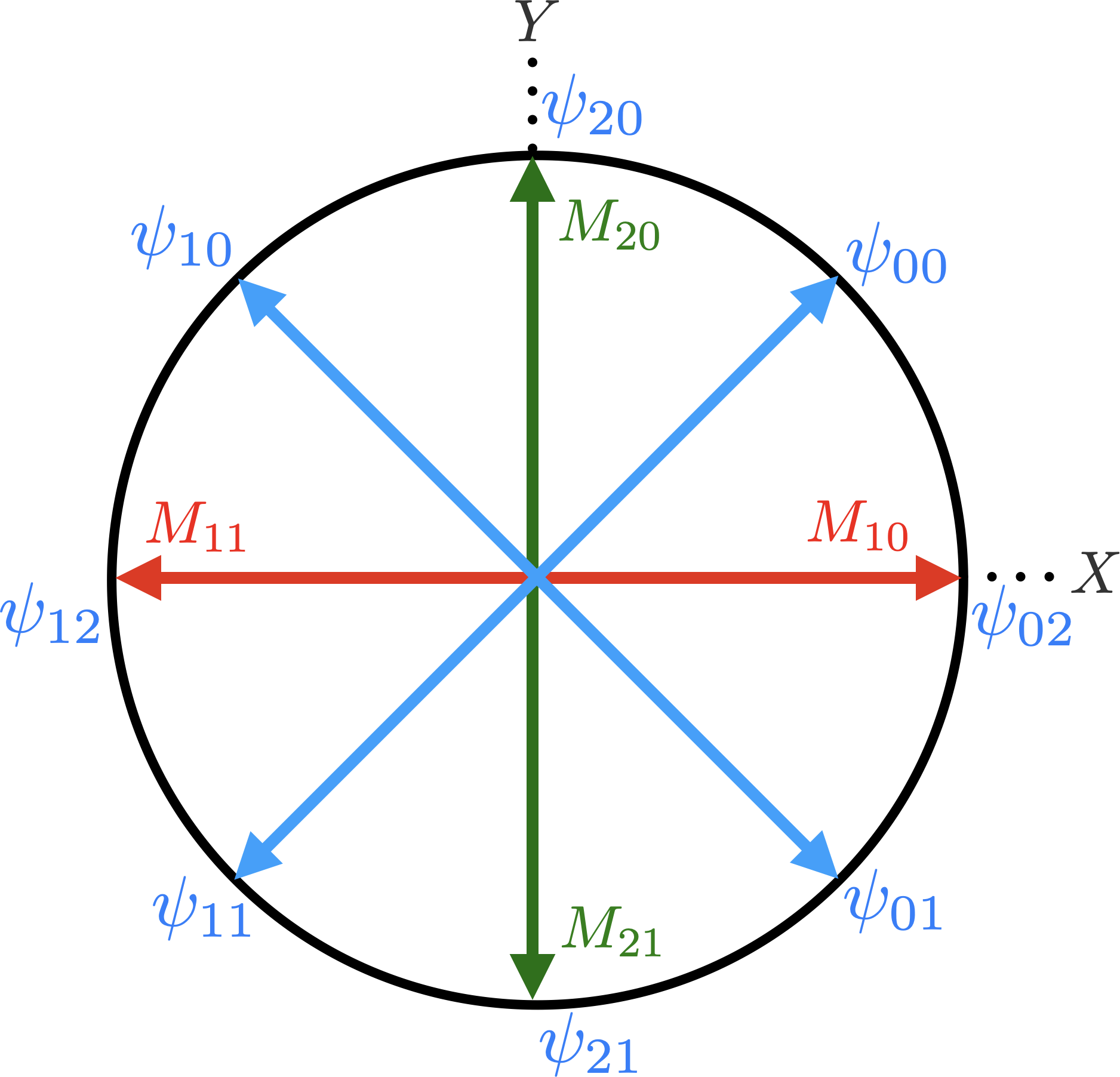}	
   \caption{Illustration of states and measurements for the scheme discussed in text. States are displayed in light blue and the measurements $\mathcal{M}_1$ and $\mathcal{M}_2$ in red and green respectively.} \label{fig:23recsm}
		\end{center}
	\end{figure}

The quantum protocol works as follows. Alice sends a state $\rho_{a_1a_2}$ and Bob first decides which one, $a_1$ or $a_2$, he is going to read. For the first trit $a_1$, he applies measurement $M_1$, and for the second, $a_2$, he applies $M_2$. The probability of successfully excluding a one-bit message of Alice is given by,
\begin{eqnarray} 
   p^{ \mathrm{(REC)}}_{Q} &=&\frac{1}{18} \sum_{a}\sum_{i=1}^2{\rm Tr} [ (M_{i(a_i\oplus 1)}+M_{i(a_i\oplus 2)})~\rho_{a_1 a_2}  ] ~~\label{eq:avsq}
\end{eqnarray}
where $\oplus$ represents addition modulo three. It follows that  
\bea
p^{ \mathrm{(REC)}}_{Q} = \frac{7+\sqrt{2}}{9}  \label{eq:qbound}
\eea
which is larger than the maximum classical value $8/9$ (Proposition~1.). Thus, the utility of quantum resources in RECs is demonstrated.

\subsection{ Maximum quantum advantage}
There may exist states and measurements other than those in Eqs. (\ref{eq:s}) and (\ref{eq:m}), in particular using POVMs containing no null elements, such that the success probability is even higher. That is, one can ask if the bound in Eq. (\ref{eq:qbound}) is maximal. Our answer is in the affirmative. Firstly, we find that to maximize the success probability with qubit transmission, it suffices to restrict states and measurements to a half-plane on a Bloch sphere. This also holds true even if parties aim at a guessing task, and hence, applies to RACs with qubits. Secondly, extensive numerical optimization of the average success probability finds Eq. (\ref{eq:qbound}) with two POVM elements are zeros. We detail the calculations in Appendix~\ref{appendA}. The gap between quantum and classical bounds is
\bea
\Delta_{QC}^{(\mathrm{REC})} = \max_Q p^{(\mathrm{REC})}  -\max_C p^{(\mathrm{REC})}  = \frac{\sqrt{2}-1}{9} , \label{eq:del}
\eea
which is approximately $0.046$.

For comparison, let us also consider a $(2,3)$ RAC with single-qubit transmission. An upper bound to success probabilities with classical strategies is known as $5/9$ \cite{deba+PRA2023}. We compute the maximum average probability with quantum resources, applying the aforementioned result of reducing to a great circle. We find that a relabelling of our quantum protocol for the REC gives the quantum maximum for the RAC (see Appendix \ref{appendA}); messages $a=a_1a_2$ are encoded in the states orthogonal to ones given in Eq. (\ref{eq:s}) and the same measurements as in Eq. (\ref{eq:m}) are applied for decoding. The quantum bound is obtained as $(4+\sqrt{2})/9$. Interestingly, the gap is identical to Eq. (\ref{eq:del}), i.e.,
 \bea
\Delta_{QC}^{(\mathrm{RAC})} =  \Delta_{QC}^{(\mathrm{REC})}\label{eq:equal}
\eea
for the considered $(2,3)$ RAC and $(2,3)$ REC.
\par
To summarise, we have characterised bit and qubit implementations of the $(2,3)$ REC and shown that using quantum systems allows for a higher average probability of success. We also contrasted the behaviour of the $(2,3)$ RAC. 

\section{Dimensional Advantages}

In what follows, we are motivated by the observation that a single-shot scenario gives one outcome among possible detection events. This raises the question of which dimensional resources are required to observe distinctions between classical and quantum behaviours in the single-shot setting. Having characterised above the optimal performance of the $(2,3)$ REC in two-dimensional systems, we now turn our attention to a more general setting. We investigate the minimal quantum and classical resources to describe possible detection events in an errorless $(2,3)$ REC. We first note that there is no difference in behaviour for the standard scenario. This further motivates us to add an extra constraint, uniformity of outcomes, to our protocol. Classical implementations of this modified protocol require nine-dimensional systems, whereas quantum systems of four-dimensional behaviour, showing a dimensional advantage in uniform RECs. This behaviour is contrasted with RACs, where no such advantage can be observed.
\par
\subsection{Perfect random exclusion codes}
We first note that no dimensional advantage can be observed for zero-error implementations of the basic $(2,3)$ REC which has been discussed so far. To do this, we note that there exists a three-dimensional classical, and therefore also quantum, strategy which may implement this task with no error. The explicit strategy is for Alice to encode her messages using the partition
\begin{equation}
    \begin{split}
    \mathbb{P}_0 &= \{ 11, 12, 22 \} \\
    \mathbb{P}_1 &= \{ 00, 02, 20, 22 \} \\
    \mathbb{P}_2 &= \{ 01, 10 \}
    \end{split}
\end{equation}
and send the relevant trit value to Bob. If he announces the value he has received, he will exclude without error. As we saw in the previous section, there exists neither a classical nor a quantum strategy that is able to achieve an average success probability of one using a two dimensional implementation. We can thus conclude that three is the minimal dimension requirement in both cases.

\subsection{Uniform random exclusion codes}
In order to witness a dimensional advantage in the $(2,3)$ REC, we must therefore analyse a modified form of the problem. This extra requirement will be \emph{uniformity} of the outcomes, meaning that the correct answers by Bob all occur with equal probability. This constraint is regularly considered in the literature on exclusion tasks (see, e.g., Ref \cite{Heinosaari2024Simple, PhysRevResearch.2.033374}) in order to avoid certain trivial solutions to the problem of state exclusion. It is in fact uniform state exclusion that can be considered a strong signature of quantum, rather than classical, behaviour. The constraint can be furthermore justified if we consider that Alice and Bob are trying to convince a referee, who provides Alice with her string, of their ability to perform quantum exclusion. If their results are non-uniform, then the referee will be able to use their strategy against them when the task is performed over several rounds. For example, in our optimal qubit scheme, Bob will never announce two as his answer. The referee can thus always send strings that do not contain this value, lowering their probability of success. To summarise, our aim here will be to determine the minimum quantum and classical dimensions required to implement a zero-error, uniform $(2,3)$ REC.
\par
The natural framework for studying dimensional requirements is that of communication matrices \cite{Kerppo2023Communication}, which we review here for the reader's convenience. A communication matrix is a row-stochastic matrix encoding the conditional probabilities of prepare-and-measure experiments. If the inputs are a set of classical labels $ \alpha \in [ 0, m-1]$ and the outputs a set of classical labels $ \beta \in [0, n-1]$ then the communication matrix is an $m \times n$-sized row-stochastic matrix $C$ in which 
\begin{equation}
C_{\alpha \beta} = {\rm Prob}_{M|P}(\beta|\alpha),
\end{equation}
where the subscripts $P$ and $M$ denote preparation and measurement respectively. For instance, the communication matrix of a binary state discrimination task in which two states are \emph{perfectly} distinguished would be the diagonal matrix $S_2 = {\rm diag}[1,1]$. Similarly, a uniform state exclusion task with three states excluded in an unbiased manner would correspond to the communication matrix \begin{align} \label{eq:a3matrix}
    A_3 = \frac{1}{2}\begin{bmatrix}
    0 & 1 & 1 \\ 1 & 0 & 1 \\ 1 & 1 & 0
\end{bmatrix}.
\end{align}
This can be implemented in a quantum system by the trine-antitrine construction given earlier in Eqs. \ref{eq:trinem} and \ref{eq:antitrinem} .
\par
The use of communication matrices in quantum information science regards the different sets $C^d_{cl}$ and $C^d_{q}$ which can be generated by $d$-dimensional classical and quantum systems respectively. We recall that the quantum dimension is that of the associated Hilbert space and the classical one is defined by the number of perfectly distinguishable letters in a communication channel. An important theorem due to Frenkel and Weiner \cite{Frenkel2015Classical} tells us that the closure of the quantum and classical two sets is equal whenever they are generated by systems with matching dimension. The question becomes more intriguing if we ask how many dimensions are required of a quantum or classical system to implement a given communication matrix. With this in mind, it has been shown that the minimal quantum and classical dimensions follow from two different properties of a given communication matrix. Firstly, it is known that the minimum classical dimension required is given by the nonnegative rank:
\par
\definition{\textbf{Nonnegative rank.} A matrix $C$ is said to be nonnegative if all elements are nonnegative. The nonnegative rank \cite{Cohen1993Nonnegative} of an $n\times m$ matrix $C$, denoted $\nrank{C}$, is defined as the smallest number $k$ such that there exists an $n \times k$ nonnegative matrix $L$ and $k \times m$ nonnegative matrix $R$ such that $LR=C$. }
\par
\vspace*{0.2cm}
Similarly, the minimum dimension of a quantum system is given by the positive semi-definite rank:
\par
\definition{ \textbf{Positive semidefinite rank.} The positive semidefinite rank \cite{Fawzi2015Positive} (also \emph{psd rank}) of a $n\times m$ nonnegative matrix $C$, denoted $\psdrank{C}$ is defined as the smallest integer $k$ such that there exist positive semidefinite $k \times k$ matrices $A_1, \ldots, A_n$ and $B_1, \ldots, B_m$ such that $C_{ij} = {\rm Tr}[A_i B_j]$.}
\par
\vspace*{0.2cm}
Let us consider these quantities for the three-input unbiased exclusion task seen in $A_3$ (Eq. \ref{eq:a3matrix}). The psd rank of this matrix is found to be $\psdrank{A_3}=2$, which concurs with our observation that it may be implemented using qubits. On the other hand, the non-negative rank is $\nrank{A_3}=3$, meaning that classical bits may not implement unbiased exclusion without error.
\par
With this in mind, let us find the minimal dimension systems required for our task. The communication matrix for the considered uniform $(2,3)$ REC with unit success probability can be obtained as follows,
\begin{align} \label{eq:d3matrix}
D_3 &= \frac{1}{4}
\begin{bmatrix}
    0 & 0 & 0 & 0 & 1 & 1 & 0 & 1 & 1 \\
    0 & 0 & 0 & 1 & 0 & 1 & 1 & 0 & 1 \\
    0 & 0 & 0 & 1 & 1 & 0 & 1 & 1 & 0 \\
    0 & 1 & 1 & 0 & 0 & 0 & 0 & 1 & 1 \\
    1 & 0 & 1 & 0 & 0 & 0 & 1 & 0 & 1 \\
    1 & 1 & 0 & 0 & 0 & 0 & 1 & 1 & 0 \\
    0 & 1 & 1 & 0 & 1 & 1 & 0 & 0 & 0 \\
    1 & 0 & 1 & 1 & 0 & 1 & 0 & 0 & 0 \\
    1 & 1 & 0 & 1 & 1 & 0 & 0 & 0 & 0 \\
\end{bmatrix},
\end{align}
where each element corresponds to the probability of Bob inferring a given string using his observed outcome. This matrix can be decomposed as $D_3 = A_3 \otimes A_3$.
\par
We first note that the nonnegative rank of $D_3$ is equal to nine. As the nonnegative rank is lower bounded by the rank, and the above matrix is full rank, it follows that $\nrank{D_3}=9 $. Thus, a classical channel with at least nine perfectly distinguishable symbols is required to realise $(2,3)$ RECs in a perfect, uniform manner.
\par 
A minimal quantum dimension can be found by upper and lower bounds to the positive-semidefinite rank. We consider a quantum realization to ensure an upper bound and then compute a general lower bound. The two bounds are found to coincide. 
\par
A quantum realization using two qubits is as follows. Alice encodes $a_1$ of her string into one qubit, and $a_2$ into another, as one of the three trine states in Eq. (\ref{eq:trinem}). That is, the encoding is
\begin{equation}
\rho_{a_1 a_2} = |\psi_{a_1} \rangle \langle \psi_{a_1} | \otimes |\psi_{a_2} \rangle \langle \psi_{a_2} |,
\end{equation}
On receiving the systems, Bob applies an anti-trine measurement (see Eq. (\ref{eq:antitrinem})) onto the relevant qubit. That is, if he wishes to perform exclusion on the first bit of the string he uses the POVM
\begin{equation}
M_1 = \{ \frac{2}{3} | \psi^{\perp}_{b_1} \rangle \langle \psi^{\perp}_{b_1}  | \otimes \mathbb{I}  \}_{b_1 = 0}^{2},
\end{equation}
and likewise, to perform exclusion on the second bit, the POVM 
\begin{equation}
M_2 = \{ \frac{2}{3} \mathbb{I} \otimes | \psi^{\perp}_{b_2} \rangle \langle \psi^{\perp}_{b_2}  |  \}_{b_2 = 0}^{2},
\end{equation}
He can hence exclude the state, and therefore classical label, unambiguously. Furthermore, due to the symmetry in the ensemble, the remaining outcomes occur equiprobably, satisfying our second constraint. A two-qubit space with a dimension of four, therefore, suffices for the purpose. We have 
\bea
\mathrm{rank}_{psd} (D_3) \leq 4. \label{eq:ub}
\eea 
In Ref. \cite{Lee2017Some}, a lower bound to psd ranks for positive doubly stochastic matrices is shown. Applying the bound to $D_3$, we  have
\begin{equation}
    \psdrank{D_3} \geq \max_q \frac{1}{\sum_{i,j} q_i q_j F(  \mathrm{col}_i (D_3 ), \mathrm{col}_j (D_3 ))^2} \nonumber
\end{equation} 
where $ \mathrm{col}_i (D_3 )$ is the $i$-th column of $D_3$, the maximum runs over probability vectors $q = [q_1,\cdots, q_9] $ and $F(a,b) = \sum_i \sqrt{a_i b_i}$. We compute that the lower bound is
\bea
\psdrank{D_3} \geq 4, \nonumber 
\eea
which occurs if $q_i= 1/9$ for $i=1,\cdots,9$. As upper and lower bounds coincide, we conclude a minimal quantum dimension $4$. The $(2,3)$ REC can be realized on a smaller quantum dimension than a classical one, showing a distinction in quantum and classical supports for detection events.
\par
\subsection{Dimension requirements for (2,3) RAC}
For contrast, we also consider the required minimal quantum and classical dimensions for a $(2,3)$ RAC.  Again, we will be concerned with a perfect implementation of a random access code, i.e., one with an average success probability of one. In order to implement RACs perfectly, one would expect that all possible bit strings can be encoded onto perfectly distinguishable quantum or classical states. This can be proven if one notices that the communication matrix for noiseless communication is given by 
\begin{equation}
\mathbb{I}_9 = \mathrm{diag}[1,1,1,1,1,1,1,1,1] 
\end{equation}
which is on the support of a dimension $9$. For this matrix, it can be shown that $\mathrm{rank}_{+} (  \mathbb{I}_9 )  = \mathrm{rank}_{psd} ( \mathbb{I}_9) =9$ following a known theorem that the nonnegative and positive semidefinite ranks coincide with the dimension for the identity matrix \cite{Fawzi2015Positive}. Hence, no dimensional advantage is achieved in the perfect $(2,3)$ RAC. In other words, if a protocol can implement random access perfectly, we can implement exclusion perfectly by relabelling outcomes. The converse, however, is not true: one cannot realize random access with certainty on the quantum support where random exclusion is perfect. In general, one would expect that perfect \emph{access} of data would require at least as many perfectly distinguishable states as the number of bit strings available to participants, whereas, as we have shown, this does not hold for exclusion. This result emphasises that exclusion is a strong signature of quantum behaviour.

\section{Discussion and Conclusions}
Finally, RECs can be generalized to higher dimensions. We analyse in Appendix~\ref{appendB} the $(2,m)$ REC with a single qubit and show quantum advantages for all values of $m\geq 2$. Namely, we have the maximum of the success probability with classical resources, 
\bea
\max_C p^{(REC)} = 1-\frac{1}{m^2}. 
\eea
We extend the quantum strategy considered for $m=3$ to a larger $m$ and have,
\bea
p_{Q}^{(REC)} = 1-\frac{2-\sqrt{2}}{m^2}, 
\eea
which is strictly higher than the classical one. Quantum advantages remain for all $m\geq 2$. It should be emphasised that the upper bound we have provided for quantum (2, m) RECs may not be optimal.

In conclusion, we have presented a framework for identifying quantum advantages of a single-shot protocol by introducing a communication primitive, RECs, which relies on an exclusion task. A single-shot scenario has a detection event where outcomes occur probabilistically. Quantum advantages are provided by a probability and a dimension. Firstly, the probability that the exclusion task is successful from a single detection event is higher with quantum strategies than with classical ones. Secondly, by exploiting communication matrices, we have shown that the quantum resources of a smaller dimension can realize RECs with a larger alphabet. The dimensional quantum advantage may be lacking in RACs; a protocol for a RAC implies an implementation of a REC by relabelling, but the converse does not hold, similarly to the distinction between state discrimination and state exclusion.  

Regarding future directions, it would be interesting to investigate general properties of $(n,m)$ RECs and their relations to quantum state exclusion. In particular, we leave it open to characterize optimal measurements for RECs and quantum state exclusion to find general relations between them. We also conjecture that RECs may have quantum advantages for all $n,m\geq2$. In addition, it is also interesting to combine the probability and the dimension advantages of quantum resources to amplify the utility. We envisage that our results would lead to practical quantum information applications that outperform their classical counterparts. 

\section*{ACKNOWLEDGEMENTS}

We thank Rutvij Bhavsar for discussions regarding this project. This work is supported by Business Finland funded project BEQAH, the National Research Foundation of Korea (Grant No. NRF-2021R1A2C2006309, NRF-2022M1A3C2069728, RS-2024-00408613, RS-2023-00257994) and the Institute for Information \& Communication Technology Promotion (IITP) (RS-2023-00229524, RS-2025-02304540). 

\appendix

\section{Maximum quantum success probability in $(2,3)$ REC}\label{appendA}
 Here we provide a proof of the maximum quantum success probability in a $(2,3)$ REC task. To this
end, in the following, first we show by applying a known result for an optimal quantum random access code that if
only two outcome measurements are applied for decoding, the quantum protocol we have provided in the main text is an
optimal one.

\begin{proposition}\label{prop:optimal-qubit-two-outcome}
    Over the set of all two-outcome POVMS, the maximum average success probability in the quantum random exclusion task is $\frac{7+\sqrt{2}}{9}$.
\end{proposition}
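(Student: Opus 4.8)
The plan is to turn the maximisation of Eq.~(\ref{eq:avsq}) into a minimisation of a ``wrong-exclusion'' weight, decouple Bob's two measurement settings, reduce each to a binary minimum-error discrimination, and finish with a short geometric optimisation in the Bloch ball. First I would use the completeness relation $M_{i(a_i\oplus1)}+M_{i(a_i\oplus2)}=\id-M_{ia_i}$ to rewrite the success probability as
\[
p^{(\mathrm{REC})}_Q = 1-\frac{1}{18}\sum_{a_1,a_2}\sum_{i=1}^{2}\mathrm{Tr}[M_{ia_i}\,\rho_{a_1a_2}],
\]
so that maximising exclusion is the same as minimising the total probability of Bob outputting the true letter. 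Introducing the marginals $\sigma^{(1)}_{a_1}=\sum_{a_2}\rho_{a_1a_2}$ and $\sigma^{(2)}_{a_2}=\sum_{a_1}\rho_{a_1a_2}$, each of trace $3$, this weight separates into $\sum_{a_1}\mathrm{Tr}[M_{1a_1}\sigma^{(1)}_{a_1}]+\sum_{a_2}\mathrm{Tr}[M_{2a_2}\sigma^{(2)}_{a_2}]$; since $\mathcal{M}_1$ and $\mathcal{M}_2$ are independent POVMs, the two sums are minimised separately.

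Next I would use the two-outcome hypothesis: in each setting one effect vanishes, so Bob never outputs some letter and the remaining pair $\{M_{ik},\,M_{il}=\id-M_{ik}\}$ forms a binary measurement. The corresponding term equals $\mathrm{Tr}[\sigma^{(i)}_l]+\mathrm{Tr}[M_{ik}(\sigma^{(i)}_k-\sigma^{(i)}_l)]$, whose minimum over $0\le M_{ik}\le\id$ is the Helstrom value $3-\tfrac12\norm{\sigma^{(i)}_k-\sigma^{(i)}_l}_1$, using that the two marginals share the same trace. Minimising also over which letter is discarded picks the most distinguishable pair, giving the compact identity
\[
\max_{\mathcal{M}}p^{(\mathrm{REC})}_Q=\frac{2}{3}+\frac{1}{36}\,(T_1+T_2),\qquad T_i=\max_{k<l}\norm{\sigma^{(i)}_k-\sigma^{(i)}_l}_1 .
\]

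It remains to maximise $T_1+T_2$ over the nine qubit states. Writing $\rho_{a_1a_2}=\tfrac12(\id+\vec v_{a_1a_2}\cdot\vec\sigma)$ with $|\vec v_{a_1a_2}|\le1$, the trace norms become Euclidean distances between the row sums $\vec R_{a_1}=\sum_{a_2}\vec v_{a_1a_2}$ and the column sums $\vec C_{a_2}=\sum_{a_1}\vec v_{a_1a_2}$ (this step uses only that trace distance equals Bloch distance, so no half-plane restriction is needed). After relabelling so that the optimal pairs are $(0,1)$ in both rows and columns, I would linearise with the support function $|\vec w|=\max_{|\hat n|=1}\hat n\cdot\vec w$, so that
\[
T_1+T_2\le\max_{\hat n,\hat m}\ \sum_{a_1,a_2}\bigl|\,\mu_{a_1a_2}\hat n+\nu_{a_1a_2}\hat m\,\bigr|,
\]
where the incidence coefficients $\mu_{a_1a_2},\nu_{a_1a_2}\in\{0,\pm1\}$ record membership of the chosen row/column pairs. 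Evaluating the nine terms with $\gamma$ the angle between $\hat n$ and $\hat m$ collapses the sum to $4+4|\cos(\gamma/2)|+4|\sin(\gamma/2)|$, maximised at $\gamma=\pi/2$ to give $T_1+T_2\le4+4\sqrt2$ and hence $p^{(\mathrm{REC})}_Q\le(7+\sqrt2)/9$. The optimiser aligns $\vec v_{a_1a_2}\propto\mu_{a_1a_2}\hat n+\nu_{a_1a_2}\hat m$, which is exactly the ensemble of Eqs.~(\ref{eq:s})--(\ref{eq:m}) at $\gamma=\pi/2$; it is automatically planar, recovering the half-plane structure and attaining the bound in Eq.~(\ref{eq:qbound}).

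The step I expect to be the main obstacle is this last optimisation, because the nine states are coupled through two inequivalent marginalisations and cannot be optimised one at a time. The support-function linearisation is what breaks the coupling: once the two directions $\hat n,\hat m$ are fixed every state contributes independently, and the whole problem reduces to the single angle $\gamma$. The point requiring care is to check that the pattern of coefficients $\mu_{a_1a_2},\nu_{a_1a_2}$ is invariant under relabelling of rows and columns, so that the bound $4+4\sqrt2$ holds for every configuration rather than only for a conveniently labelled one.
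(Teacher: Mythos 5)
Your proof is correct, but it takes a genuinely different route from the paper's. The paper fixes (w.l.o.g.) $M_{12}=M_{22}=0$, splits the nine words into the block $\mathbb{P}=\{00,01,10,11\}$ and its complement, bounds the five words in $\mathbb{P}^c$ trivially by $1$ per instance (contributing $5/9$), and recognises the remaining block as a relabelled $(2,2)$ RAC whose known quantum optimum $(1+1/\sqrt{2})/2$ supplies the other $\tfrac{4}{9}\cdot\tfrac12(1+1/\sqrt2)$; the two bounds happen to be simultaneously saturable, giving $(7+\sqrt2)/9$. You instead marginalise the ensemble into row and column states $\sigma^{(i)}$, reduce each setting to a Helstrom discrimination of the two surviving marginals (the trace-$3$ normalisation correctly kills the trace term), and obtain the closed form $\tfrac23+\tfrac1{36}(T_1+T_2)$ before linearising with support functions to get $T_1+T_2\le 4+4\sqrt2$. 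Your route is self-contained (it does not import the $(2,2)$ RAC bound), it produces the optimal ensemble as the equality case of the support-function step, and it makes transparent both why the optimal states lie in the plane spanned by $\hat n,\hat m$ and why the REC and RAC gaps coincide; the paper's route is shorter but leans on an external result and on the lucky tightness of the crude $\mathbb{P}^c$ bound. The two points you flag as delicate are both fine: the coefficient multiset $\{(\pm1,\pm1),(\pm1,0),(0,\pm1),(0,0)\}$ is indeed invariant under independent relabelling of the two positions, and the resulting sum $4+2|\hat n+\hat m|+2|\hat n-\hat m|$ is maximised at $\gamma=\pi/2$ as you claim.
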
   

\begin{proof}
Without loss of generality, we consider the two 2-outcome POVMs applied in the protocol are: $M_1\equiv M_{10}+M_{11}=\mathbb{I}$ ~(we take $M_{12}=0$) and $M_2\equiv M_{20}+M_{21}=\mathbb{I}$ ~(we take $M_{22}=0$). Let us partition the set of all words $a=a_1a_2\in \mathcal{A}^2$ into two parts: $\mathbb{P}=\{00,01,10,11\}$ and $\mathbb{P}^c=\{02,20,12,21,22\}$. Then, from the Eq.~(\ref{eq:avsq}) we get
\begin{align}
  {p}^{\mathrm{(REC)}}_{Q} &\!=\! \frac{1}{18} \sum_{a}\sum_{i=1}^2{\rm Tr} [ (\mathbb{I}-M_{ia_i}) \rho_{a_1 a_2}  ] \nonumber\\ 
    &= \frac{4}{9}\left\{\frac{1}{8}\sum \limits_{a\in \mathbb{P}} \sum\limits_{i=1}^2 {\rm Tr} \left[(\mathbb{I}-M_{ia_i}) \rho_{a_1 a_2}\right]\right\}\nonumber \\
    &+\frac{1}{18}\sum \limits_{a\in \mathbb{P}^c}\left\{\sum\limits_{i=1}^2 {\rm Tr} [ (\mathbb{I}-M_{ia_i}) \rho_{a_1 a_2}  ] \right\} \nonumber\\
    &\leq \frac{4}{9}\left\{\frac{1}{8}\sum \limits_{a\in \mathbb{P}} \sum\limits_{i=1}^2 {\rm Tr} \left[(\mathbb{I}-M_{ia_i}) \rho_{a_1 a_2}\right]\right\} +\frac{5}{9} \nonumber\\
     &\leq \frac{4}{9}\left\{\max\limits_{\{\substack{M_{ia_i},\\\rho_{a_1a_2}}\}}~\frac{1}{8}\sum \limits_{a\in \mathbb{P}} \sum\limits_{i=1}^2 {\rm Tr} \left[(\mathbb{I}-M_{ia_i}) \rho_{a_1 a_2}\right]\right\} \!+\!\frac{5}{9} \nonumber\\
      &= \frac{4}{9}\left\{\max\limits_{\{\substack{\tilde{M}_{ia_i},\\\rho_{a_1a_2}}\}}~\frac{1}{8}\sum \limits_{a\in \mathbb{P}} \sum\limits_{i=1}^2 {\rm Tr} \left[\tilde{M}_{ia_i}~ \rho_{a_1 a_2}\right]\right\} +\frac{5}{9} \nonumber \\
      &~~~~~~~(\mbox{we~rewrite}~\mathbb{I}-M_{ia_i}=\tilde{M}_{ia_i}~),\nonumber\\
      &\leq \frac{4}{9}\left\{\frac{1}{2}\left(1+\frac{1}{\sqrt{2}}\right)\right\}+\frac{5}{9}=\frac{7+\sqrt{2}}{9}.
\end{align}
In the last step above, we simply apply the known result, i.e., in a $(2,2)$ RAC, the quantum maximum for the average success probability is $(1+1/\sqrt{2})/2$. Since our quantum protocol reaches the derived quantum upper bound, we find it to be optimal over the set of all two-outcome measurement protocols.

\end{proof}

In general, the optimization problem consists of maximizing the figure of merit over all possible three-outcome POVMs on a qubit space. To find optimal states and measurements for the purpose, let us rewrite the average success probability as follows, 
\begin{eqnarray} 
    {p}^{\mathrm{(REC)}}_{Q} &=& \frac{1}{36} \sum_{a} {\rm Tr} [\sum_{i=1}^2 [ (\mathbb{I}-M_{ia_{i}}) \otimes \rho_{a_1 a_2}^{T})  |\phi^+\rangle\langle \phi^+ | ],\nonumber 
\end{eqnarray}
where $|\phi^+\rangle = (|00\rangle + |11\rangle)/\sqrt{2}$. Note that a state $|\phi^+\rangle$ is permutationally invariant, which means that the states and the measurements that maximise the success probability should be in a symmetric subspace, supported by a quantum $2$-design. This implies that optimal states and measurements fulfil the condition,
\begin{eqnarray} 
\sum_{i=1}^2 (\mathbb{I}-M_{ia_{i}}) \propto \rho_{a_1 a_2}^{*}. \label{eq:para}
\end{eqnarray}
Applying the above, we will then show that to achieve the maximal success probability, it is sufficient to optimize a function which depends only on the measurement parameters (i.e., for any given decoding measurements, one can always choose an optimal set of states derived from the measurement). All these simplify the construction of states and measurements. \\

First, recall that we need to maximize  ${p}^{\mathrm{(REC)}}_{Q}$ in Eq.~(\ref{eq:avsq}) in the main text. From the objective function, one can see that the maximum is achieved with extremal (pure) states and measurements. Then let us consider a set of nine pure qubit states
\begin{eqnarray} 
    \rho_{a_1a_2}&=&\frac{1}{2} (\mathbb{I} +\hat{n}_{a_1a_2}\cdot \vec{\sigma}), 
\end{eqnarray}
where $\hat{n}_{a_1a_2}$ for $a_1,a_2\in \{0,1,2\}$ are unit vectors on the Bloch sphere, and two measurements corresponding to extremal POVMs which are given by rank one projectors, i.e.,
\begin{eqnarray} 
    M_{1a_1}&=&  r_{1a_1}\left(\frac{\mathbb{I} +\hat{m}_{1a_1}\cdot \vec{\sigma}}{2}\right),\\[5pt]
     M_{2a_2}&=&r_{2a_2}\left(\frac{\mathbb{I} +\hat{m}_{2a_2}\cdot \vec{\sigma}}{2}\right),\label{eq:extr}
\end{eqnarray}
 where $\hat{m}_{1a_1}$ ($\hat{m}_{2a_2}$) are unit vectors on the Bloch sphere in a plane passing through the origin such that weights $0\leq r_{1a_1}\leq 1$ ($0\leq r_{2a_2}\leq 1$) and $\sum_{a_1} r_{1a_1}=2$~ ($\sum_{a_2} r_{2a_2}=2$). On rearranging the terms in ${p}^{\mathrm{(REC)}}_{Q}$ in Eq.~(\ref{eq:avsq}), the expression for quantum success probability becomes
\begin{eqnarray}
{p}^{\mathrm{(REC)}}_{Q}&=&1-\frac{1}{18}\sum_{a}{\rm Tr}\left[ (M_{1a_1}+M_{2a_2})\rho_{a_1a_2}\right]\nonumber\\
&=&\frac{2}{3}-\frac{1}{36}\sum_{a}\left\{\hat{n}_{a_1a_2}\cdot(r_{1a_1}\hat{m}_{1a_1}+r_{2a_2}\hat{m}_{2a_2})\right\},\nonumber\\
&~&\mbox{then~application~of~Eq.~(\ref{eq:para})~gives} \nonumber \\
{p}^{\mathrm{(REC)}}_{Q}&\leq& \frac{2}{3}+\frac{1}{36}~\sum_{a}\vert r_{1a_1}\!\hat{m}_{1a_1}\!+\!r_{2a_2}\hat{m}_{2a_2}\vert.
\end{eqnarray}
 Note that, very similar to the the considered REC task, on analysing the corresponding RAC, through Eq.~(\ref{eq:avscRAC}), for maximum quantum success, we get an achievable upper bound function as follows:
    \begin{eqnarray}
{p}^{\mathrm{(RAC)}}_{Q}
&\leq& \frac{1}{3}+\frac{1}{36}~\sum_{a}\vert r_{1a_1}\!\hat{m}_{1a_1}\!+\!r_{2a_2}\hat{m}_{2a_2}\vert.
\end{eqnarray}
Thus, the optimization problems both for the quantum REC and RAC reduce to maximizing the same objective function
\begin{eqnarray} 
    f &=& \sum_{a}|r_{1a_1}\hat{m}_{1a_1}\!+\!r_{2a_2}\hat{m}_{2a_2}|.
\end{eqnarray}
From the steps above one can easily see that for arbitrary fixed measurements, one can always choose states such that the upper bound is attained. 
Therefore, it is sufficient to maximize the quantity
\begin{eqnarray} 
    f \!&=&\! \sum_{a_2\!=\!0}^{2}\!\sum_{a_1\!=\!0}^{2}|r_{1a_1}\hat{m}_{1a_1}\!+\!r_{2a_2}\hat{m}_{2a_2}| \nonumber \\[5pt]
     \!&=&\! \sum_{a_2\!=\!0}^{2}\!\sum_{a_1\!=\!0}^{2}\!\!\left\{r_{1a_1}^2\!\!+\!r_{2a_2}^2\!\!+\!2 r_{1a_1}\!r_{2a_2}\!\cos(\phi_{1a_1}\!\!-\!\phi_{2a_2})\sin\theta_{2a_2}\right\}^{\frac{1}{2}}, \nonumber \\ \label{eq:f}
\end{eqnarray}
which is a function of the two measurements derived from the extremal POVMs $M_i$, where $i\in\{1,2\}$ and $(\theta_{ia_i}, \phi_{ia_i})$ are spherical polar coordinates of the Bloch vector $\hat{m}_{ia_i}$. The weight parameters of the three-outcome (corresponding to indices $\kappa \in\{0,1,2\}$) POVMs, can be written by
\begin{eqnarray} 
    r_{i\kappa}\!&=&\!\frac{2 \sin \alpha_{i\kappa}}{\sin \alpha_{i0}\!+\!\sin \alpha_{i1}\!+\!\sin \alpha_{i2}},\label{eq:length-r}\\[3pt]
     &~&\alpha_{i0}+\alpha_{i1}+\alpha_{i2}=2\pi,\label{eq:angles}  \\[3pt]
    &~& 0\leq \alpha_{i0}, ~\alpha_{i1}, ~\alpha_{i2} \leq \pi \label{eq:angleconstraints}.      
\end{eqnarray}
See Fig.~\ref{3outPOVM} to visualize the measurements.\\

	\begin{figure}[t!]
		\begin{center}
			\includegraphics[scale=0.14]{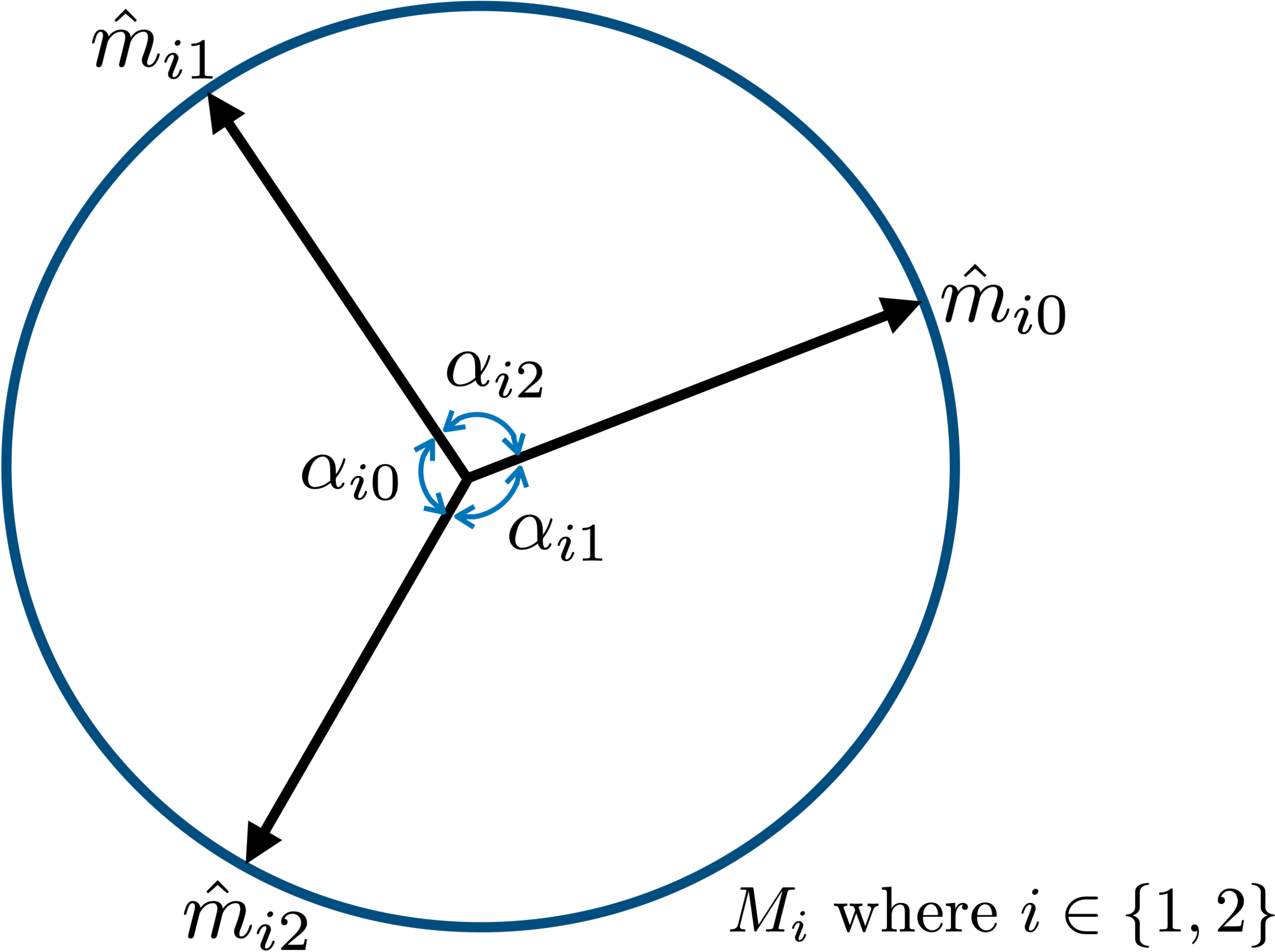}
			\caption{Illustration of a three-outcome extremal POVM (on a qubit) on a plane passing through the origin of the Bloch sphere. }
			\label{3outPOVM}
		\end{center}
	\end{figure}
    
    In general, the Bloch vectors corresponding to two different measurements may occupy two different planes. We choose, without loss of generality, the Bloch plane of the measurement  $M_1$ as the $XY$-plane. We choose the $Y$-axis along the line where the planes of the two measurements $M_1$ and $M_2$ intersect.  Let us denote the angle of the unit vector normal to the plane of the measurement $M_2$ with the $Z$-axis by $0\leq \Theta\leq \pi$. Further, without loss of generality, for the second measurement $M_2$ we set the angle of the Bloch vector of the first outcome $\phi_{20}=0$, and for the first measurement let us denote the angle of the Bloch vector of the first outcome $\phi_{10}=\Phi$. Now, the objective function in Eq.~(\ref{eq:f}) can be re-written in terms of the variables $\alpha_{ia_i}$, $\Theta$, and $\Phi$ by applying Eq.~(\ref{eq:length-r}) and transforming the variables $(\theta_{ia_i}, \phi_{ia_i})$ to the new variables. \\

Let us then choose a vector orthonormal to the plane $M_2$ as $\hat{\mathbf{n}}_2=(-\sin\Theta,~0,~\cos\Theta)$ and a reference vector in the plane $M_2$ as $\hat{\mathbf{p}}_2=(\sin(\pi/2-\Theta),~0,~\cos(\pi/2-\Theta))$. The Bloch vectors of the POVM elements $M_{2j}$, where $j\in \{0,1,2\}$ is given by
\begin{equation}
    \hat{m}_{2j}=(\sin\theta_{2j}\cos\phi_{2j},~\sin\theta_{2j}\sin\phi_{2j},~\cos\theta_{2j})
\end{equation}
One can solve the system of equations
\begin{eqnarray}
    \hat{\mathbf{n}}_2\cdot \hat{m}_{2j}&=&0  \\
     \hat{\mathbf{p}}_2\cdot \hat{m}_{2j}&=&\cos \alpha_{2j}  
\end{eqnarray}
to find the variables $\theta_{2j}$ and $\phi_{2j}$ in terms of the variable $\Theta$ and $\alpha_{2j}$. Thus, the objective function $f$ in Eq.~(\ref{eq:f})
is now written in terms of six free variables: i.e., $f(\Theta, \Phi, \alpha_{10}, \alpha_{12}, \alpha_{20},\alpha_{22})$ (here we have applied Eq.~(\ref{eq:angles}) to eliminate the two dependent variables $\alpha_{11}$ and $\alpha_{21}$). Then the following proposition holds.\\

\begin{proposition}\label{prop:reduction-to-XY-plane}
   The optimal value of $f$ in Eq.~(\ref{eq:f}) is achieved with both the three outcome POVMs $M_1$ and $M_2$ in the $XY$-plane of the Bloch sphere, that is, when variable $\Theta =0$.
\end{proposition}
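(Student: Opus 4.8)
The plan is to isolate the dependence of $f$ on the single tilt variable $\Theta$ with the other five parameters $(\Phi,\alpha_{10},\alpha_{12},\alpha_{20},\alpha_{22})$ held fixed, and to argue that the optimum sits at $\Theta=0$. With $\hat m_{1a_1}$ in the $XY$-plane and $\hat m_{2a_2}$ in the plane tilted by $\Theta$ about the intersection axis $Y$, each Bloch dot product $\hat m_{1a_1}\cdot\hat m_{2a_2}$ splits into a $\Theta$-independent piece (the product of the components along $Y$) plus $\cos\Theta$ times the product of the two components perpendicular to $Y$. Hence every summand of Eq.~(\ref{eq:f}) takes the form $\sqrt{P_{a_1a_2}+Q_{a_1a_2}\cos\Theta}$, with $Q_{a_1a_2}$ proportional to the product of the axis-perpendicular components of $\hat m_{1a_1}$ and $\hat m_{2a_2}$. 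Since $t\mapsto\sqrt{P+Qt}$ is concave, $f$ is concave in $c:=\cos\Theta$ on $[-1,1]$.

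Because $d(\cos\Theta)/d\Theta=0$ at $\Theta=0$ (and at $\Theta=\pi$), the derivative $df/d\Theta$ vanishes there automatically, so $\Theta=0$ is a stationary point of $f(\Theta)$ for every choice of the remaining parameters. Concavity in $c$ then reduces the proposition, along each such slice, to a single sign condition: $c=1$ is the maximizer iff $df/dc\ge 0$ at $c=1$, i.e.\ iff $\sum_{a_1,a_2} Q_{a_1a_2}/|r_{1a_1}\hat m_{1a_1}+r_{2a_2}\hat m_{2a_2}|\ge 0$ evaluated at $\Theta=0$. I would try to certify this sign using the two completeness relations $\sum_{a_1} r_{1a_1}\hat m_{1a_1}=0$ and $\sum_{a_2} r_{2a_2}\hat m_{2a_2}=0$, which force the axis-perpendicular components of each measurement to sum to zero, together with the gauge fixings $\phi_{20}=0$ and $\phi_{10}=\Phi$ already imposed in the setup.

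The main obstacle, and the reason the statement concerns the global optimum rather than every slice, is that this sign condition can genuinely fail. For two trine POVMs one checks directly that a small tilt strictly increases $f$, so $c=1$ is then a local minimum along the slice and the slice-maximizer lies in the interior. The reduction to $\Theta=0$ therefore cannot be established shape-by-shape and must exploit the joint optimization over shapes and orientation. A symmetry that must be reckoned with is the reflection of the $M_2$ Bloch vectors through the $Y$-axis within their own plane: it preserves both completeness relations while flipping the sign of every $Q_{a_1a_2}$, so that, by concavity, the average of any configuration and its reflected partner is dominated by the $c=0$ (perpendicular) configuration. This shows the symmetry alone pushes toward $\Theta=\pi/2$, not $\Theta=0$, and so cannot by itself prove the claim.

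The route I would take to close the argument is to invoke compactness to fix a global maximizer and then show it can be taken $\Theta$-insensitive. Concretely, at a maximizer with interior tilt the stationarity condition $df/dc=0$ must be combined with stationarity in the in-plane angles to force all axis-perpendicular components of the optimal $M_2$ to vanish, i.e.\ the optimal $M_2$ degenerates to a two-outcome measurement with outcomes along $\pm Y$. That configuration is exactly the one of Proposition~\ref{prop:optimal-qubit-two-outcome}; since its Bloch vectors lie on the rotation axis, $f$ is independent of $\Theta$ and the optimal value is in particular attained at $\Theta=0$. The hardest step is precisely this degeneracy claim: proving that any interior-$\Theta$ maximizer is $\Theta$-insensitive, which requires handling the coupled nonlinear stationarity conditions over all nine pairs $(a_1,a_2)$ rather than term by term.
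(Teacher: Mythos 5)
Your structural decomposition is correct: writing $c=\cos\Theta$, each summand of Eq.~(\ref{eq:f}) becomes $\sqrt{P_{a_1a_2}+Q_{a_1a_2}\,c}$, where $Q_{a_1a_2}$ is proportional to $r_{1a_1}r_{2a_2}$ times the product of the components of the two Bloch vectors perpendicular to the intersection axis $Y$, so $f$ is indeed concave in $c$ on $[-1,1]$. Your claim that the slice-wise reduction fails is also correct and is the most valuable part of the proposal: for two aligned symmetric trines one computes $df/dc\vert_{c=1}=-1/2<0$, so a small tilt strictly increases $f$ (numerically $f(\Theta=0)=8$ while $f(\Theta=\pi/2)\approx 8.15$, both below the claimed optimum $4(1+\sqrt{2})\approx 9.66$), and since $f(\Theta=\pi)=4\sqrt{3}<8$ the maximum over $\Theta$ for that fixed shape is attained in the interior. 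This is in direct tension with the paper's own proof, which asserts that for arbitrary fixed values of the remaining variables the partial derivative $\partial f/\partial\Theta$ keeps a constant sign on $[0,\pi]$ so that the slice maximum sits at an endpoint; your example shows that assertion is false as stated, so the paper's (already abbreviated) argument does not go through shape-by-shape either.

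However, your proposal does not close the gap it correctly identifies. The entire burden is shifted onto the final degeneracy claim --- that at a global maximizer with interior tilt the coupled stationarity conditions force every axis-perpendicular component of $M_2$ to vanish, collapsing $M_2$ to a two-outcome measurement along $\pm Y$ --- and you offer no argument for it beyond naming it the hardest step. Note that this claim essentially contains the full optimization result of Proposition~\ref{prop:qmax} (that the optimum is attained by two orthogonal two-outcome projective measurements), which the paper itself supports only by numerical optimization; your route would therefore require proving analytically something the paper never establishes, and the tilted-trine example shows that interior-$\Theta$ critical points with nonvanishing perpendicular components do exist away from the global optimum, so the degeneracy cannot follow from first-order conditions alone without using global information. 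As it stands the proposal is a correct diagnosis plus a plan, not a proof: until the stationarity analysis over all nine pairs is actually carried out, or replaced by a genuinely global argument such as a $\Theta$-uniform upper bound $f\le 4(1+\sqrt{2})$, the proposition remains unestablished by your argument.
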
  
\begin{proof}---We omit the lengthy expressions and steps, but the idea is simple. The objective function in Eq.~(\ref{eq:f}) is written in terms of $\Theta$ and the remaining variables. Then, we analyze the effect of changing the angle $\Theta$ between the two measurement planes. This is done by taking the partial derivative of the objective function $f$ with respect to the variable $\Theta$. We find that for an arbitrarily fixed value of all the other variables, the partial derivative with respect to $\Theta$ is either always nonnegative or always nonpositive, i.e., the function is either increasing or decreasing. A maximum (for any fixed value of all the other variables) will therefore be achieved at an end point in the interval $0\leq \Theta\leq \pi$, implying that the optimal measurement $M_2$ lies in the $XY$-plane of the Bloch sphere (i.e., both the measurements are in the same plane). 
\end{proof}
\vspace{2cm}
Applying the above proposition, the optimization problem is now reduced to maximizing
\begin{eqnarray} 
   \!\!\!\!\!\!f^{\ast}\!&=&\!\!\sum_{a_2\!=\!0}^{2}\!\sum_{a_1\!=\!0}^{2}\!\!\left\{\!r_{1a_1}^2\!\!+\!r_{2a_2}^2\!\!+\!2r_{1a_1}\!r_{2a_2}\!\cos(\phi_{1a_1}\!\!-\!\phi_{2a_2})\right\}^{\frac{1}{2}\!}, \nonumber \\ \label{eq:fstar}
\end{eqnarray}
where both the three outcome measurements are from the $XY$-plane.
In Fig.~(\ref{xy-plane}) two three outcome extremal POVMs in the XY-plane are shown.
	\begin{figure}[t!]
		\begin{center}
			\includegraphics[scale=0.15]{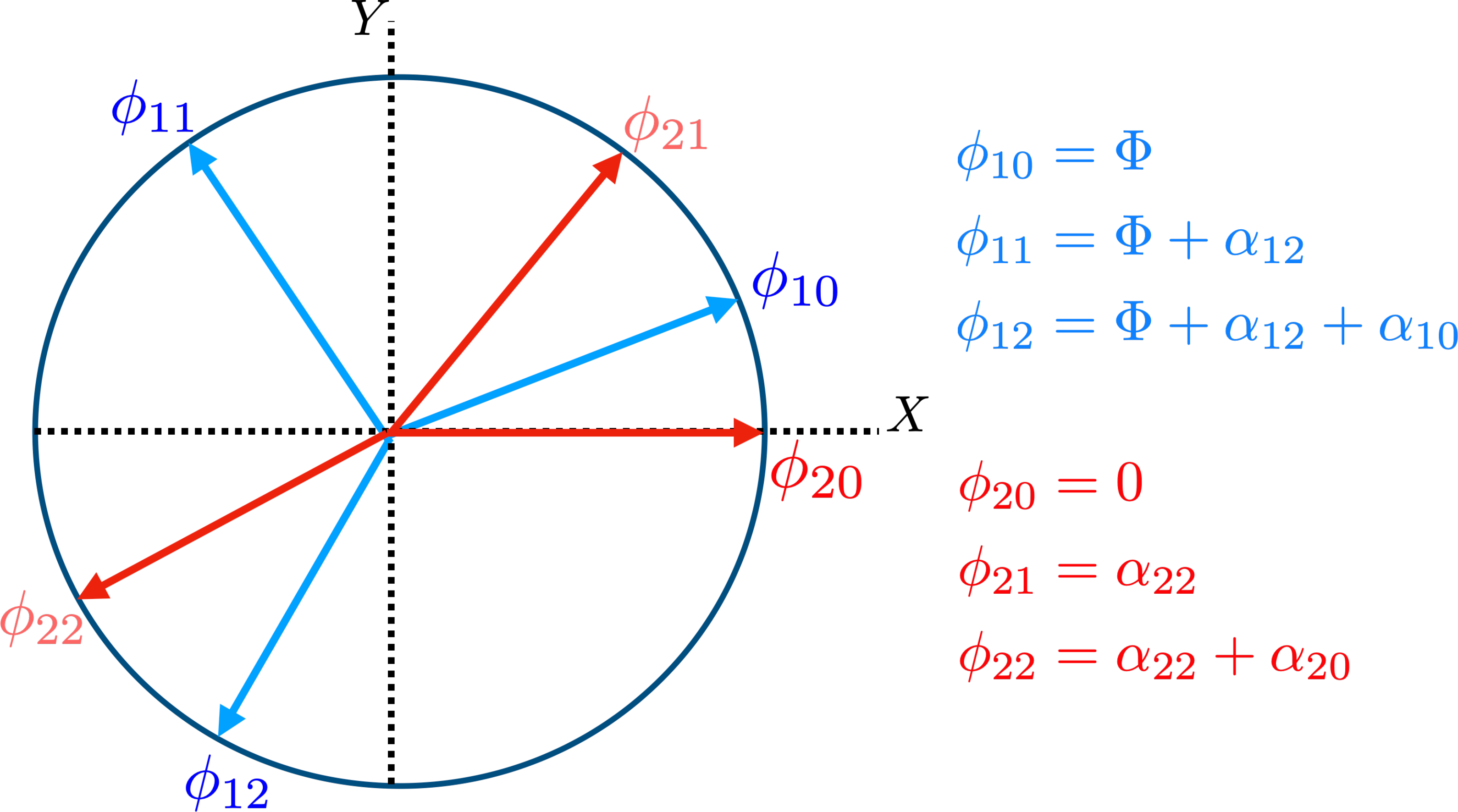}	
   \caption{Illustration of three outcome extremal POVMs on the $XY$-plane of the Bloch sphere.} \label{xy-plane}
		\end{center}
	\end{figure}
In Eq.~(\ref{eq:fstar}), for $i\in\{1,2\}$, we substitute
\begin{eqnarray}
 r_{i0}&=& \frac{2 \sin \alpha_{i0}}{\sin \alpha_{i0}-\sin (\alpha_{i0}+\alpha_{i2})+\sin \alpha_{i2}},\\[5pt]
     r_{i1}&=&\frac{-2\sin (\alpha_{i0}+\alpha_{i2})}{\sin \alpha_{i0}-\sin (\alpha_{i0}+\alpha_{i2})+\sin \alpha_{i2}},\\[5pt]
      r_{i2}&=& \frac{2\sin \alpha_{i2}}{\sin \alpha_{i0}-\sin (\alpha_{i0}+\alpha_{i2})+\sin \alpha_{i2}}; \\[10pt]
    \phi_{i0}&=&(2-i)~\Phi,\\
    \phi_{i1}&=& (2-i)~\Phi + \alpha_{i2},\\
    \phi_{i2}&=& (2-i)~\Phi + \alpha_{i2} + \alpha_{i0}.    
\end{eqnarray}
The free variables in the optimization problem are now explicit; constraints on these variables are given by
\begin{eqnarray}
    &~&0\leq \alpha_{i0},~\alpha_{i2}\leq \pi ~\mbox{and}~\pi \leq \alpha_{i0} + \alpha_{i2} \leq 2 \pi,\\
    &~&0\leq \Phi \leq \pi.
\end{eqnarray}

\begin{proposition}\label{prop:qmax}
   The optimal value of $f^{\ast}$ in Eq.~(\ref{eq:fstar}) is given by $4(1+\sqrt{2})$ and is achieved when both measurements $M_1$ and $M_2$ are two outcome projective measurements in orthogonal directions. Consequently, the optimal quantum success probability for the REC task is $(7+\sqrt{2})/9$ (and that for the corresponding RAC task is $(4+\sqrt{2})/9$); in particular, the quantum protocol presented in the main text is an optimal one for the REC task.
\end{proposition}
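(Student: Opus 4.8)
The plan is to recast $f^{\ast}$ geometrically and then reduce the three-outcome optimization to the two-outcome case already settled in Proposition~\ref{prop:optimal-qubit-two-outcome}. First I would write, for each planar POVM $i\in\{1,2\}$, the weighted Bloch vectors $\vec{v}_{i\kappa}=r_{i\kappa}\hat{m}_{i\kappa}$, so that each summand of \eqref{eq:fstar} is simply $|\vec{v}_{1a_1}+\vec{v}_{2a_2}|$, while the POVM conditions become the closure relation $\sum_{\kappa}\vec{v}_{i\kappa}=\vec{0}$ together with the normalization $\sum_{\kappa}|\vec{v}_{i\kappa}|=2$. In this language $f^{\ast}=\sum_{a_1,a_2}|\vec{v}_{1a_1}+\vec{v}_{2a_2}|$ is a sum of norms of pairwise sums, and the feasible set for each measurement is the intersection of a linear subspace with a mixed $\ell_1$-$\ell_2$ norm ball, which is compact and convex.

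The key structural observation is that, for fixed measurement $2$, the map $\vec{w}\mapsto\sum_{a_2}|\vec{w}+\vec{v}_{2a_2}|$ is convex, so $f^{\ast}$ is a convex function of the vectors of measurement $1$ (and symmetrically of measurement $2$). A convex function on a compact convex set attains its maximum at an extreme point, which pushes the optimizer towards degenerate, low-rank configurations; combined with the elementary fact that a two-outcome extremal qubit POVM with $M_0+M_1=\mathbb{I}$ is forced to be a projective measurement along antipodal unit Bloch vectors, this motivates that the optimum is attained by projective $M_1,M_2$. On the two-outcome facet $r_{i2}\to0$ the problem collapses exactly to the one solved in Proposition~\ref{prop:optimal-qubit-two-outcome}: writing the relative angle between the two projective axes as $\delta$, a short computation gives $f^{\ast}=4\bigl(|\cos(\delta/2)|+|\sin(\delta/2)|\bigr)+4$, maximized at $\delta=\pi/2$ with value $4(1+\sqrt{2})$, i.e. orthogonal measurements, matching the explicit $\{|\pm\rangle\}$, $\{|\pm i\rangle\}$ protocol of the main text.

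The main obstacle is that convexity alone does not close the argument: the feasible region genuinely contains three-outcome extreme points—for example the trine, with three equal weights at $120^\circ$—so a priori the maximum could sit at such a vertex. The substantive step is therefore to certify $\sup f^{\ast}=4(1+\sqrt{2})$ over all three-outcome configurations. I would attempt this within the reduced in-plane parameterization $f^{\ast}(\Phi,\alpha_{10},\alpha_{12},\alpha_{20},\alpha_{22})$ left after Proposition~\ref{prop:reduction-to-XY-plane}: locate the interior stationary points in the weight-angles $\alpha_{ik}$ and the relative orientation $\Phi$, verify by a second-order (or direct value) check that each is non-maximal—for instance the aligned trine pair yields only $f^{\ast}=8<4(1+\sqrt{2})$—and confirm that the boundary facets $r_{ik}\to0$ reproduce precisely the two-outcome problem with optimum $4(1+\sqrt{2})$. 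This critical-point bookkeeping is the lengthy part of the analysis. Granting it, $\max f^{\ast}=4(1+\sqrt{2})$ follows, and substituting into ${p}^{\mathrm{(REC)}}_{Q}=\tfrac{2}{3}+\tfrac{1}{36}f^{\ast}$ and ${p}^{\mathrm{(RAC)}}_{Q}=\tfrac{1}{3}+\tfrac{1}{36}f^{\ast}$ gives $(7+\sqrt{2})/9$ and $(4+\sqrt{2})/9$ respectively, certifying the main-text protocol as optimal.
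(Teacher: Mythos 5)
Your reformulation and the facet computation are sound: with both POVMs projective and two-outcome, the nine terms of Eq.~(\ref{eq:fstar}) do reduce to $2|\hat{m}_{10}+\hat{m}_{20}|+2|\hat{m}_{10}-\hat{m}_{20}|+4=4\left(|\cos(\delta/2)|+|\sin(\delta/2)|\right)+4$, maximized at $\delta=\pi/2$ with value $4(1+\sqrt{2})$, and the substitutions into ${p}^{\mathrm{(REC)}}_{Q}=\tfrac{2}{3}+\tfrac{1}{36}f^{\ast}$ and ${p}^{\mathrm{(RAC)}}_{Q}=\tfrac{1}{3}+\tfrac{1}{36}f^{\ast}$ correctly give $(7+\sqrt{2})/9$ and $(4+\sqrt{2})/9$. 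However, the argument is not closed. As you concede, the convexity/extreme-point heuristic cannot exclude genuinely three-outcome extreme configurations such as the trine; moreover the feasible set is really the sphere $\sum_{\kappa}|\vec{v}_{i\kappa}|=2$ rather than a convex ball, so "compact and convex" is not literally available without a relaxation step. The decisive part of the proof --- certifying that no three-outcome in-plane configuration exceeds $4(1+\sqrt{2})$ --- is announced as critical-point bookkeeping over the five remaining parameters $(\Phi,\alpha_{10},\alpha_{12},\alpha_{20},\alpha_{22})$ but never carried out; evaluating one interior candidate (the aligned trine pair, $f^{\ast}=8$) does not certify a global maximum.

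You should also know that the paper does not close this step analytically either: its stated justification for Proposition~\ref{prop:qmax} is precisely the reduction of Proposition~\ref{prop:reduction-to-XY-plane} followed by numerical optimization in Mathematica, returning $\max f^{\ast}\approx 9.65685\approx 4(1+\sqrt{2})$. So your proposal lands at the same point by a slightly different route: the paper derives the two-outcome optimum (Proposition~\ref{prop:optimal-qubit-two-outcome}) by isolating the word block $\{00,01,10,11\}$ and invoking the known $(2,2)$ RAC value $\tfrac{1}{2}(1+1/\sqrt{2})$, whereas you obtain it directly from the facet geometry of $f^{\ast}$; both then delegate the three-outcome case to an unexecuted search (yours analytic in intent, the paper's numerical in fact). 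To upgrade your plan to a proof you would need either to actually enumerate and bound the stationary points of $f^{\ast}$ on the five-parameter domain, or to produce a direct analytic upper bound of $4(1+\sqrt{2})$ valid for all planar three-outcome extremal POVM pairs.
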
  

Our proposition is supported by our analytical simplification (reduction) of the optimization problem stated in proposition~(\ref{prop:reduction-to-XY-plane}), followed by numerical optimization implemented on Mathematica, giving an approximate maximum value $\max(f^{\ast})\approx 9.65685\approx 4(1+\sqrt{2})$.\\


    \section{Bounds on random exclusion task for alphabet size m} \label{appendB}
    \begin{proposition}
        The maximum classical average success probability for the random exclusion task with alphabet size $m$ is $\max_C ~{p}^{\mathrm{(REC)}}=(1-\frac{1}{m^2})$.
    \end{proposition}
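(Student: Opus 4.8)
The plan is to mirror the structure of Proposition~1 (the $m=3$ case): first optimise over all deterministic protocols, then extend to probabilistic ones by convexity. Here the classical dimension is $d=2$, so Alice transmits a single bit. A deterministic protocol is specified by (i) Alice's partition of the $m^2$ words $a=a_1a_2$ into two parts $\{\mathbb{P}_0,\mathbb{P}_1\}$, encoded into the bit $c\in\{0,1\}$, and (ii) Bob's decoding, which assigns to each received bit $c$ and question $i\in\{1,2\}$ an output $b_i^c\in\{0,\dots,m-1\}$. Thus Bob's strategy is fixed by the two ``excluded words'' $w^0=b_1^0 b_2^0$ and $w^1=b_1^1 b_2^1$. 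I would phrase the figure of merit through the number of failures: for a word $a$ placed in $\mathbb{P}_c$, Bob fails at position $i$ exactly when $b_i^c=a_i$, so writing the Iverson counts $g_c(a)=[a_1=b_1^c]+[a_2=b_2^c]\in\{0,1,2\}$, the total number of failures is $\sum_a g_{c(a)}(a)$ and $p^{(\mathrm{REC})}=1-\tfrac{1}{2m^2}\sum_a g_{c(a)}(a)$.

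The key simplification is that Alice's partition choices for distinct words are independent, so for a fixed Bob she may place each word $a$ in whichever part minimises $g_c(a)$; the minimal number of failures for that Bob is therefore $F=\sum_a \min\{g_0(a),g_1(a)\}$. First I would observe that $F$ depends only on the \emph{agreement pattern} of $w^0$ and $w^1$, namely whether $b_1^0=b_1^1$ and whether $b_2^0=b_2^1$, since relabelling the alphabet leaves the count invariant. This reduces the optimisation over Bob to four cases. Partitioning the values of $a_1$ (and of $a_2$) into ``equals $b_i^0$'', ``equals $b_i^1$'', and ``other'' and tabulating $\min\{g_0,g_1\}$ on the resulting grid, I expect $F=2$ when both coordinates differ ($b_1^0\neq b_1^1$ and $b_2^0\neq b_2^1$), $F=m$ when exactly one coordinate agrees, and $F=2m$ when both agree. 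Hence $\min_{\text{Bob}}F=2$ for every $m\geq 2$, yielding the deterministic optimum $p^{(\mathrm{REC})}=1-\tfrac{2}{2m^2}=1-\tfrac{1}{m^2}$.

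For achievability I would take $b_i^c=c$ for all $i$; then sending $c$ fails at position $i$ iff $a_i=c$, so the failures of word $a$ number $\min\{\#\{i:a_i=0\},\#\{i:a_i=1\}\}$, which is nonzero only for $a\in\{01,10\}$ and equals $1$ there, recovering $F=2$ and generalising the partition $\mathbb{P}_0=\{11,12,21,22\}$, $\mathbb{P}_1=\{00,01,02,10,20\}$ of the $m=3$ case. Finally, any probabilistic strategy is a convex mixture $\{p_j,S_j\}$ of deterministic ones, so $p^{(\mathrm{REC})}_S=\sum_j p_j\, p^{(\mathrm{REC})}_{S_j}\leq 1-\tfrac{1}{m^2}$, which closes the argument.

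The main obstacle will be the casework of the second paragraph: one must confirm that the four agreement patterns are exhaustive, handle the degenerate coincidences (e.g.\ $w^0=w^1$, which forces both coordinates to agree) correctly, and verify that the $3\times3$ (resp.\ $2\times3$, $2\times2$) cell counts and the per-cell values of $\min\{g_0,g_1\}$ are right. In particular one must check that in the optimal pattern the only contributing cells are the two ``cross'' cells, $(a_1=b_1^0,\,a_2=b_2^1)$ and $(a_1=b_1^1,\,a_2=b_2^0)$, each a single word contributing $1$. Everything else is routine bookkeeping.
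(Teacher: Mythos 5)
Your proof is correct, and while it rests on the same combinatorial fact as the paper's --- the unavoidable failures live on the words $b_1^0b_2^0$, $b_1^1b_2^1$ and the two ``cross'' words $b_1^0b_2^1$, $b_1^1b_2^0$ --- you organise the argument differently. The paper fixes both Alice's partition and Bob's decoding and runs a four-way case analysis on which part of the partition the two words $b_1^0b_2^0$ and $b_1^1b_2^1$ fall into, showing in each case that at least two of the $2m^2$ instances fail; the slightly fiddly step is its Case~4, where the sub-cases $b_1^0b_2^1\neq b_2^0b_1^1$ and $b_1^0b_2^1=b_2^0b_1^1$ must be treated separately. You instead optimise Alice out first, noting that for fixed Bob she may place each word independently in the part minimising its failure count, so the deterministic optimum is governed by $F=\sum_a\min\{g_0(a),g_1(a)\}$; your case analysis is then over Bob's agreement pattern and yields the exact values $F\in\{2,m,2m\}$ rather than just the bound $F\geq 2$. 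This buys you a complete characterisation of the optimal Bob strategies (both coordinates must differ) and folds achievability into the same computation --- a genuine plus, since the paper's general-$m$ proof only establishes the upper bound and the convexity step, leaving the achieving strategy implicit by analogy with the $m=3$ case, whereas you exhibit $b_i^c=c$ explicitly and verify $F=2$. Your checks in the last paragraph are the right ones (the degenerate case $w^0=w^1$ forces both coordinates to agree and gives $F=2m$, and only the two cross cells contribute in the optimal pattern); I verified the cell counts and they come out as you expect.
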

\begin{proof}
Let us first find the optimal value over all the deterministic protocols. Our proof follows closely that of the $(2,3)$ REC analyses in the main text. The two stages of any deterministic protocol are as follows. (i) Alice sending one bit information $c\in\{0,1\}$ corresponds to some two partition $\mathbb{P}_0, \mathbb{P}_1$ of the set of all possible words $\mathcal{A}^2:=\{a_1a_2: a_1,a_2\in\{0,1,2,\dots,m-1\}\}$; if the received word $a=a_1a_2\in \mathbb{P}_0$, Alice communicates $c=0$ to Bob, otherwise she sends $c=1$. (ii) Bob's answer depends on the received bit, he answers with, say, $b_i^c\in \{0,1,2,\dots , m-1\}$, i.e., Bob's deterministic strategy corresponds to assigning values to $b_1^0,b_2^0,~b_1^1,b_2^1 \in \{0,1,2,\dots,m-1\}$.\\

First we show that for any deterministic strategy of Alice and Bob, for at least two out of the total $2m^2$ possible instances ($m^2$ possible words to Alice and two possible questions to Bob), Bob's answers will be incorrect. To show this, first, we note that from any deterministic strategy of Bob one can derive the following two words $b_1^0b_2^0, b_1^1b_2^1 \in \mathcal{A}^2$, therefore, for these two words, given a deterministic strategy of Alice, there are four possibilities:
\begin{enumerate}
    \item $b_1^0b_2^0, ~b_1^1b_2^1 \in \mathbb{P}_0$,
    \item $b_1^0b_2^0, ~b_1^1b_2^1 \in \mathbb{P}_1$,
    \item $b_1^0b_2^0 \in \mathbb{P}_0$ and $b_1^1b_2^1 \in \mathbb{P}_1$,
    \item $b_1^0b_2^0 \in \mathbb{P}_1$ and $ b_1^1b_2^1 \in \mathbb{P}_0$.
    
\end{enumerate}
In the case-1, when Alice receives the word $b_1^0b_2^0$, Bob's answer is $b_i^0$ which is incorrect at both the positions $i\in\{1,2\}$. In case-2, similar to case-1, when Alice receives the word $b_1^1b_2^1$, Bob's answer is $b_i^1$ which is incorrect at both the positions $i\in\{1,2\}$.\\

In case-3 and case-4 it can never happen that $b_1^0b_2^0=b_1^1b_2^1$ because same word cannot appear in two distinct partitions.\\

For the case-3, when Alice receives the word $b_1^0b_2^0$ ($b_1^1b_2^1$), Bob's answer is $b_i^0$ ($b_i^1$) which is incorrect at both the positions $i\in\{1,2\}$; thus in the case-3 Bob answers incorrectly in at least four instances, namely $b_1^0b_2^0,b_1^1b_2^1$.\\

Lastly, in the case-4, first suppose that $b_1^0b_2^1 \neq b_2^0b_1^1$. Then for the word $b_1^0b_2^1$, to whichever partition it belongs, Bob's answer will be incorrect at least at one position $i\in\{1,2\}$; similarly, for the word $b_2^0b_1^1$, to whichever partition it belongs, Bob's answer will be incorrect at least at one position $i\in\{1,2\}$. In the remaining situation, i.e., $b_1^0b_2^1 = b_2^0b_1^1\Leftrightarrow b_1^0=b_2^0~\mbox{and}~b_1^1=b_2^1$, on considering the words $b_1^0b_1^1$ and $b_1^1b_1^0$ (irrespective which partition they belong to) Bob's answer is incorrect at least two times; at least once for the first word $b_1^0b_1^1$ and at least once for the second word $b_1^1b_1^0$. Then, in case-4 as well, Bob's answer is incorrect in least in two instances.\\

Therefore, in all the four cases, we get that for any deterministic strategy of Alice and Bob, in at least two out of $2m^2$ cases Bob answers incorrectly. Therefore, maximal average success probability with deterministic strategies is upper bounded by $\frac{2m^2-2}{2m^2}=1-\frac{1}{m^2}$.\newline

Finally, we show that probabilistic strategies cannot do better. Lets say there are total $N$ number of deterministic strategies $S_j$ where $j\in\{1,\cdots,N\}$. Suppose, Alice and Bob have shared random variables $\{\lambda_j\}_{j=1}^{N}$ with probability distribution $p(\lambda_j)=p_j$. Note that any probabilistic strategy can be implemented with an access to the considered shared random variables. Then if Alice and Bob apply strategy $S_j$ with probability $p_j$ we get a probabilistic strategy $S=\{p_j,S_j\}_{j=1}^N$. Then the average success probability with any such strategy $S$ is ${p}^{\mathrm{(REC)}}_{S}= \sum_j p_j {p}^{\mathrm{(REC)}}_{S_j} \leq  \sum_j p_j \left(1-\frac{1}{m^2} \right)= 1-\frac{1}{m^2}$. This completes the proof.
\end{proof}

{\bf Quantum advantage.---} Here we will show that there is a quantum protocol which considers only two outcome measurements for decoding to beat the optimal classical value derived above. Alice sends the information on her words $a_1a_2$ by encoding in a one qubit state $\rho_{a_1a_2}$ as follows:

\begin{widetext}
\begin{align}
	\rho_{00}&= \frac{1}{2}\left(\mathbb{I}+\sigma_x\right),&\rho_{01}&= \frac{1}{2}\left(\mathbb{I}+\sigma_z\right),&\rho_{0c_2}&=\frac{1}{2}\left(\mathbb{I}+\frac{\sigma_x+\sigma_z}{\sqrt{2}}\right) \nonumber \\[10pt]
	\rho_{10}&= \frac{1}{2}\left(\mathbb{I}-\sigma_z\right),&\rho_{11}&= \frac{1}{2}\left(\mathbb{I}-\sigma_x\right),&\rho_{1c_2}&= \frac{1}{2}\left(\mathbb{I}-\frac{\sigma_x+\sigma_z}{\sqrt{2}}\right) \nonumber \\[10pt]
	\rho_{c_10}&= \frac{1}{2}\left(\mathbb{I}+\frac{\sigma_x-\sigma_z}{\sqrt{2}}\right),&\rho_{c_1 1}&= \frac{1}{2}\left(\mathbb{I}+\frac{\sigma_z-\sigma_x}{\sqrt{2}}\right),&\rho_{c_1 c_2}&= \frac{\mathbb{I}}{2} \nonumber \\[10pt]
    \text{where } c_1,c_2 &\in \lbrace 2,3,\ldots,m-1 \rbrace.
\end{align}
All the qubit states used for encoding are from the $XZ$-plane of the Bloch sphere. To answer the first position in the Alice's word, Bob performs a two outcome projective measurement $M_{1}$; on measurement outcome $+1$ ($-1$) he answers $0$ ($1$). Whereas for answering the second position he measures $M_2$ and answers with $0$ ($1$) when the measurement outcome is $+1$ ($-1$). The average success probability of our protocol is
\begin{eqnarray}
	{p}^{\mathrm{(REC)}}_{Q}&:=&1-\frac{1}{2m^2} \sum_{a_1,a_2\in\mathcal{A}} ~\mbox{Tr}\left[\rho_{a_1a_2} \left(M_{1a_1}+M_{2a_2}\right)\right]\nonumber \\[10pt]
	&=& 1-\frac{2-\sqrt{2}}{m^2}.\nonumber
\end{eqnarray}
Here the first and second measurements are implemented as follows
\begin{eqnarray}
	M_1&\equiv& \left\{M_{10}=\frac{1}{2}\left(\mathbb{I}-\frac{\sigma_x+\sigma_z}{\sqrt{2}}\right),~M_{11}=\frac{1}{2}\left(\mathbb{I}+\frac{\sigma_x+\sigma_z}{\sqrt{2}}\right),~M_{1c_1}=0\right\}, \nonumber\\
		M_2&\equiv& \left\{M_{20}=\frac{1}{2}\left(\mathbb{I}+\frac{\sigma_z-\sigma_x}{\sqrt{2}}\right),~M_{21}=\frac{1}{2}\left(\mathbb{I}+\frac{\sigma_x-\sigma_z}{\sqrt{2}}\right),~M_{2c_2}=0\right\}.
\end{eqnarray} 
\end{widetext}
Note that Bob never answers $\lbrace 2,3,\ldots,m-1 \rbrace$, still our protocol gives a high success because the goal here is to answer with alphabet different from those appearing in the words. Also note that the state $\rho_{c_1 c_2}$ used for encoding the words $c_1 \; c_2$, can be any arbitrary state, even any mixed state, e.g., maximally mixed state $\frac{\mathbb{I}}{2}$. This is due to the fact $M_{1c_1}=0$ and $M_{2c_2}=0$, i.e., outcomes $c_1 \text{ or } c_2$ never occur for the considered measurements.\\

\begin{proposition}
    The maximum average success probability in the quantum random exclusion task, with alphabet size $m$, for a two-outcome POVM is achieved with the above discussed quantum protocol.
\end{proposition}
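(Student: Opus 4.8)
The plan is to generalise the argument of Proposition~\ref{prop:optimal-qubit-two-outcome} (the $m=3$ case) almost verbatim, since the figure of merit splits into a genuinely two-dimensional $(2,2)$ core plus a remainder that can be made to contribute nothing to the failure. First I would use that a two-outcome measurement has only two nonzero POVM elements, and that relabelling the alphabet of each position independently is a symmetry of $p^{\mathrm{(REC)}}$ (which depends only on whether $b_i=a_i$). This lets me assume without loss of generality that the nonzero outcomes of $\mathcal{M}_1$ and of $\mathcal{M}_2$ are labelled $0$ and $1$, so that $M_{1a_1}=0$ for $a_1\geq 2$, $M_{2a_2}=0$ for $a_2\geq 2$, with $M_{10}+M_{11}=M_{20}+M_{21}=\mathbb{I}$. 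Rewriting the success probability as a failure probability,
\begin{equation}
p^{\mathrm{(REC)}}_{Q}=1-\frac{1}{2m^2}\sum_{a_1,a_2}{\rm Tr}\!\left[(M_{1a_1}+M_{2a_2})\rho_{a_1a_2}\right],
\end{equation}
the whole problem reduces to \emph{minimising} the double sum over states and two-outcome qubit measurements.

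Next I would partition the words into $\mathbb{P}=\{00,01,10,11\}$ and its complement $\mathbb{P}^{c}$, and call the corresponding pieces of the sum $T_{\mathbb{P}}$ and $T_{\mathbb{P}^{c}}$. For every word in $\mathbb{P}^{c}$ at least one of $a_1,a_2$ exceeds $1$, so at least one of $M_{1a_1},M_{2a_2}$ vanishes and the summand is ${\rm Tr}[M\rho]\geq 0$ for a single positive semidefinite operator $M$. Hence $T_{\mathbb{P}^{c}}\geq 0$ and can be dropped for the lower bound
\begin{equation}
\sum_{a_1,a_2}{\rm Tr}\!\left[(M_{1a_1}+M_{2a_2})\rho_{a_1a_2}\right]\;\geq\;T_{\mathbb{P}}.
\end{equation}
The quantity $T_{\mathbb{P}}$ is exactly eight times the average failure probability of a $(2,2)$ task built from $M_1,M_2$ restricted to outcomes $\{0,1\}$; minimising it over all two-outcome qubit POVMs and states is the $(2,2)$ RAC/REC problem, whose optimal average success is the known value $(1+1/\sqrt{2})/2$. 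This fixes $\min T_{\mathbb{P}}=8\,[1-(1+1/\sqrt2)/2]=4-2\sqrt2$ and yields
\begin{equation}
p^{\mathrm{(REC)}}_{Q}\;\leq\;1-\frac{4-2\sqrt2}{2m^2}\;=\;1-\frac{2-\sqrt2}{m^2}.
\end{equation}

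Finally I would confirm achievability with the explicit protocol already stated: its $\mathbb{P}^{c}$ states are orthogonal to the single surviving POVM element, so every $\mathbb{P}^{c}$ summand is zero, while its four $\mathbb{P}$ states together with the two orthogonal projective measurements reproduce the optimal $(2,2)$ configuration, each word contributing $1-1/\sqrt2$ so that $T_{\mathbb{P}}=4-2\sqrt2$; substituting back recovers $1-(2-\sqrt2)/m^2$ and matches the upper bound. The only genuinely nontrivial input is the $(2,2)$ bound $(1+1/\sqrt2)/2$, which I would simply cite as the known result, exactly as in Proposition~\ref{prop:optimal-qubit-two-outcome}. The point requiring care (rather than real difficulty) is the decoupling that makes the two bounds simultaneously tight: the states attached to $\mathbb{P}^{c}$ words appear in no other term, so once the optimal $(2,2)$ measurements are fixed one may freely pick each such state orthogonal to its single POVM element, driving $T_{\mathbb{P}^{c}}$ to zero without perturbing the optimised $T_{\mathbb{P}}$.
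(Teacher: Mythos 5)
Your proposal is correct and follows essentially the same route as the paper: the same partition of words into $\{00,01,10,11\}$ and its complement, trivially bounding the complement's contribution, and reducing the core term to the known $(2,2)$ RAC bound $(1+1/\sqrt{2})/2$, with achievability by the stated protocol. Working with failure rather than success probabilities and spelling out the relabelling WLOG and the decoupling that makes both bounds simultaneously tight are only presentational differences from the paper's argument.
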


\begin{proof}
Let us address the scenario wherein Alice sends a qubit instead of a bit. We can show that there is a quantum advantage in this case for all $m$. The proof of advantage only requires Bob to perform two 2-outcome measurements. Say we form the partition as $\mathbb{Q}=\lbrace 00,01,10,11 \rbrace$ and $\overline{\mathbb{Q}}=\lbrace a_1a_2\in\mathcal{A}^2: a_1a_2\notin \mathbb{Q} \rbrace$. The two 2-outcome measurements can be used to distinguish between states associated with $\mathbb{Q}$. For the states associated with $\overline{\mathbb{Q}}$, the measurement output is always correct for our task. We can find an upperbound for the quantum success probability as well: 
\begin{widetext}
\begin{align*}
    {p}^{\mathrm{(REC)}}_{Q} &= \frac{1}{2m^2} \sum_{a\in\mathcal{A}^2}\sum_{i=1}^2{\rm Tr} \left[ (\mathbb{I}-M_{ia_i}) \rho_{a_1 a_2}  \right] \\ 
    &= \frac{4}{m^2}\left\{\frac{1}{8}\sum \limits_{a\in \mathbb{Q}} \sum\limits_{i=1}^2 {\rm Tr} \left[(\mathbb{I}-M_{ia_i}) \rho_{a_1 a_2}\right]\right\}+\frac{1}{2m^2}\sum \limits_{a\in \overline{\mathbb{Q}}}\left\{\sum\limits_{i=1}^2 {\rm Tr} [ (\mathbb{I}-M_{ia_i}) \rho_{a_1 a_2}  ] \right\} \\
    &\leq \frac{4}{m^2}\left\{\frac{1}{8}\sum \limits_{a\in \mathbb{Q}} \sum\limits_{i=1}^2 {\rm Tr} \left[(\mathbb{I}-M_{ia_i}) \rho_{a_1 a_2}\right]\right\} +\frac{m^2-4}{m^2} \\
     &\leq \frac{4}{m^2}\left\{\max\limits_{\{M_{ia_i},~\rho_{a_1a_2}\}}~\frac{1}{8}\sum \limits_{a\in \mathbb{Q}} \sum\limits_{i=1}^2 {\rm Tr} \left[(\mathbb{I}-M_{ia_i}) \rho_{a_1 a_2}\right]\right\} +\frac{m^2-4}{m^2}\\
      &= \frac{4}{m^2}\left\{\max\limits_{\{\tilde{M}_{ia_i},~\rho_{a_1a_2}\}}~\frac{1}{8}\sum \limits_{a\in \mathbb{Q}} \sum\limits_{i=1}^2 {\rm Tr} \left[\tilde{M}_{ia_i}~ \rho_{a_1 a_2}\right]\right\} +\frac{m^2-4}{m^2} ~~~~(~\mbox{here}~\tilde{M}_{ia_i}=\mathbb{I}-M_{ia_i}~),\\
      &\leq \frac{4}{m^2}\left\{\frac{1}{2}\left(1+\frac{1}{\sqrt{2}}\right)\right\}+\frac{m^2-4}{m^2} \\ &= 1-\frac{2-\sqrt{2}}{m^2}.
\end{align*}
\end{widetext}
In the last step above, we simply apply the known result, i.e., in a $2\rightarrow 1$ random access codes the quantum maximum for the average success probability is $(1+1/\sqrt{2})/2$. Since our quantum protocol reaches the derived quantum upper bound, we find an optimal over the set of all two-outcome measurement protocols.
\end{proof}
With this result, we can see that, as long as Alice is only using a two level system to communicate to Bob, Alice can achieve a quantum advantage for any finite alphabet size $m$ of Alice's input since $1-\frac{2-\sqrt{2}}{m^2} > 1-\frac{1}{m^2}$.

\bibliography{references.bib}

\end{document}